\newcommand{\KV}{\mathcal{K}\mathcal{V}}
\newtheorem*{thm}{Theorem}
\newtheorem{theorem}{Theorem}
\newtheorem{lemma}[theorem]{Lemma}
\newtheorem{corollary}[theorem]{Corollary}
\newtheorem{claim}[theorem]{Claim}
\newtheorem{conjecture}[theorem]{Conjecture}
\newtheorem{definition}[theorem]{Definition}
\newtheorem{observation}[theorem]{Observation}
\newcommand\bz{\mbox{$\mathbb Z$}}
\newcommand{\marginlabel}[1]%
{\mbox{}\marginpar{\it{\raggedleft\hspace{0pt}#1}}}
\newcommand{\norm}[1]{\left\lVert #1 \right\rVert}
\def\implies{\Rightarrow}
\newlength{\pgmtab}  
\newcounter{lecnum}
\newlength{\tpush}
\newcommand{\MC}{{\sc Maximum Cut}}
\newcommand{\SC}{{\sc Sparsest Cut}}
\newcommand{\onev}{\textbf{1}} 
\begin{document}

\title{Spectral Algorithms for Unique Games}
\author{Alexandra Kolla\thanks{Microsoft Research.}}
\date{}

\maketitle

\begin{abstract}
We give a new algorithm for Unique Games which is based on purely {\em spectral} techniques, in contrast to previous
work in the area, which relies heavily on semidefinite programming (SDP). Given a highly satisfiable instance of Unique Games, our algorithm is able to recover a good assignment.
The approximation guarantee depends only on the completeness of the game, and not on the alphabet size,
while the running time depends on spectral properties of the {\em Label-Extended} graph associated with the instance of Unique Games.

We further show that on input the integrality gap instance of Khot and Vishnoi, our algorithm runs in quasi-polynomial time and decides that the instance if highly unsatisfiable. Notably, when run on this instance, the standard SDP relaxation of Unique
Games {\em fails}. As a special case, we also re-derive a polynomial time algorithm for Unique Games on
expander constraint graphs.

The main ingredient of our algorithm is a technique to effectively use the full spectrum of the underlying graph instead of just the second eigenvalue, which is of independent interest. The question of how to take advantage of the full spectrum of a graph in the design of algorithms has been often studied, but no significant progress was made prior to this work.

\end{abstract}

\newpage

\section{Introduction}

A Unique Game is defined in terms of a constraint graph $G = (V,E)$, an integer $k$ which is called the alphabet,
a set of variables $\{x_u\}_{u \in V}$, one for each vertex $u$, and
a set of permutations (constraints) $\pi_{uv}: [k] \rightarrow [k]$,
one for each edge $(u,v)\in E$. An assignment to the variables (labeling) is said to
satisfy the constraint on the edge $(u,v) \in E$ if $\pi_{uv}(x_u) = x_v$.
The edges are taken to be undirected and hence $\pi_{uv} = (\pi_{vu})^{-1}$.
The goal is to assign a value from the
set $[k]$ to each variable $x_u$ so as to maximize the number of
satisfied constraints.

Khot \cite{KhotUCSP} conjectured that it is NP-hard
to distinguish between the cases when almost all the constraints of a Unique
Game are satisfiable and when very few of the constraints are satisfiable:

\begin{conjecture}(Unique Games Conjecture-{\sf UGC})
For any constants $\epsilon, \delta > 0$, there is a $k(\epsilon,\delta)$ such that for any $k > k(\epsilon,
\delta)$, it is NP-hard to distinguish between instances of Unique
Games with alphabet size $k$ where at least $1-\epsilon$ fraction of
constraints are satisfiable and those where at most $\delta$
fraction of constraints are satisfiable.
\end{conjecture}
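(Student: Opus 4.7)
The statement under discussion is the Unique Games Conjecture of Khot, which is a famous open problem in complexity theory rather than a theorem awaiting proof in this paper. Since no proof is known, what I can offer is a sketch of the general strategy one would attempt, together with the obstacles that have so far prevented it from succeeding.

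The natural starting point is the PCP theorem, or more specifically the NP-hardness of gap-Label Cover with perfect completeness and arbitrarily small soundness. The plan would be to reduce Label Cover (whose constraints are projections $\pi_{uv} : [R] \to [L]$) to Unique Games (whose constraints must be bijections $\pi_{uv} : [k] \to [k]$), while preserving a $(1-\epsilon,\delta)$ completeness-soundness gap. The standard template is composition with an inner verifier: each Label Cover variable gets encoded by a long code (or Hadamard code) over a large alphabet, and a local test checks consistency using permutation constraints only. Soundness would presumably be analyzed by Fourier analysis on $\mathbb{Z}_q^n$, discrete invariance principles, and noise-sensitivity arguments on the hypercube, in the spirit of subsequent UGC-based hardness proofs for \MC, \SC, and related problems.

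The principal obstacle, and the reason the conjecture remains open, is that projections are fundamentally more informative as constraints than permutations: projections lose information, and any naive gadget from projection constraints to permutation constraints either blows up the alphabet or destroys the soundness gap. One would need an inner verifier in which the ``forgetting'' effect of projection constraints in Label Cover is simulated purely by permutation-based local tests whose soundness can still be controlled. I would expect the hard part to be precisely this step: constructing a soundness-preserving gadget, or alternatively, reducing directly from a new NP-hard problem whose constraint structure is closer to being bijective.

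Compounding the difficulty, the very paper we are reading provides a spectral algorithm that solves highly satisfiable Unique Games on the Khot-Vishnoi integrality-gap instance in quasi-polynomial time, the regime the conjecture asserts to be hard. While this does not refute the UGC, it means any candidate hard distribution must evade spectral attacks as well as SDP rounding, imposing further structural constraints on a hypothetical reduction. Absent a fundamentally new reduction framework, a proof of the UGC appears out of reach of current techniques, and the contribution of the present paper is algorithmic rather than directed at settling the conjecture.
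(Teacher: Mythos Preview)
Your assessment is correct: the statement is labeled as a \emph{conjecture} in the paper, and the paper offers no proof of it whatsoever. The Unique Games Conjecture is stated purely as background and motivation; the paper's actual contribution is the spectral algorithm of Theorem~\ref{thm:main}, which in fact points in the opposite direction by giving evidence that spectral techniques might eventually be used to \emph{disprove} the conjecture. Your discussion of the Label Cover reduction template and its obstacles is reasonable commentary on why the problem is open, but none of that appears in the paper, nor is it expected to---there is simply no ``paper's own proof'' to compare against here.
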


There are two reasons which make {\sf UGC} particularly intriguing. First, it is a well-balanced question. Despite continuous efforts to prove or disprove it, there is still no consensus regarding its validity. This seems to indicate that {\sf UGC} is more likely to be resolved in the near future in contrast to the {\sf P} vs. {\sf NP} problem for example, for which it is widely believed that {\sf P}$\neq${\sf NP} but current techniques have been unable to prove it. Second, as seen by a series of works, the truth of {\sf UGC} implies that the currently best known approximation algorithms for
many important computational problems have optimal approximation ratios.
Since its origin, {\sf UGC} has been successfully used to prove
often optimal hardness of approximation results for several
important {\sf NP}-hard problems such as {\sc Min-2Sat-Deletion}
\cite{KhotUCSP}, {\sc Vertex Cover} \cite{KhotRegev}, {\MC}
\cite{KKMO} and non-uniform {\SC}
\cite{CKKRS,KV}. In addition, in recent years, {\sf UGC}
has also proved to be intimately connected to the limitations of
semidefinite programming (SDP). Making this connection precise, the authors in \cite{Aus} and \cite{R} show that if {\sf UGC} is true, then for every constraint satisfaction
problem (CSP) the best approximation ratio is given by a
certain simple SDP.

 Arguably, a seemingly strong reason
for belief in {\sf UGC} is the failure of several attempts to design efficient algorithms for Unique Games using current state-of-the-art techniques, such as linear and semidefinite programming (LP/SDP). Indeed, several works \cite{KV,RS} show limitations of LPs and SDPs in solving Unique Games by exhibiting the existence of \textit{itegrality gap} instances. Those are Unique Games instances that {\em fool} the SDP since,
even though they have no good satisfying assignment, the respective SDP solution is high. The existence of such instances implies, in particular,
that there is no hope for a ``good'' approximation algorithm based on those SDPs. Moreover, recently it was shown that solving Unique Games is at least as hard as another seemingly hard problem called the small set expansion problem \cite{RS10}. Our work presents evidence that a different set of techniques might be more powerful when it comes to designing algorithms for Unique Games thus giving hope that such techniques can be used to potentially disprove {\sf UGC}.

We present a purely {\em spectral} algorithm for Unique Games that finds highly
satisfying assignments when they exist. The running time of our algorithm depends on spectral properties of the {\em Label-Extended} graph or the constraint graph associated with the instance of Unique Games.
Our algorithm runs in polynomial or quasi-polynomial time for a large class of instances, including those where the standard SDP
provably fails as well as instances where the constraint graph is an expander. At a high level, we show that given $\epsilon >0$,
 there is a $\delta=\delta(\epsilon)$ such that the algorithm is able to distinguish between the following two cases:
 \begin{itemize}
 \item \textbf{YES} case: There exists an assignment that satisfies $(1-\epsilon)$ fraction of the constraints.
 \item \textbf{NO} case: Every assignment satisfies less that $\delta$ fraction of the constraints.
 \end{itemize}

  In particular, our algorithm runs in quasi-polynomial time on input the SDP integrality gap instance as it appears in \cite{KV} (hereafter referred to as $\KV$). We note that the authors in \cite{KV} roughly showed that, when run on a certain
highly unsatisfiable instance of Unique Games, the standard SDP relaxation has very high
objective value and consequently fails to distinguish between the two cases above. As another special case, our algorithm runs in polynomial time when the constraint graph is an expander (therefore re-deriving results similar to \cite{AKKTSV} and \cite{KT}). Moreover, similarly to
\cite{AKKTSV,MM,KT,AIMS}, the performance of the algorithm does not depend on the alphabet size $k$.

Our main result is the following:

\begin{thm}(Main)\label{thm:main0}
Let $\mathcal{U}=(G,M,k)$ be a $(1-\epsilon)$ satisfiable instance of Unique Games on a $d$-regular graph $G$ with alphabet size $k$. Let $M$ be the adjacency matrix of the label-extended graph of $G$.
Let $W$ be the space spanned by eigenvectors of $M$ that have eigenvalue greater than $(1-\gamma)d$, for $\gamma > 8\epsilon$. There is an algorithm that runs in time $2^{O(\frac{\gamma}{\epsilon}\text{dim}(W))}+\text{poly}(n\cdot k)$ and finds an assignment that satisfies at
least $(1-O(\frac{\epsilon}{\gamma-8\epsilon}+\epsilon))$ fraction of the constraints.
\end{thm}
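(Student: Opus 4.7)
The plan is to translate the combinatorial problem into a quadratic form maximization on the label-extended graph and then search a low-dimensional subspace by brute force. A labeling $L\colon V \to [k]$ is represented by the $\{0,1\}$ indicator vector $\chi_L \in \mathbb{R}^{V\times[k]}$ with exactly one $1$ in each cloud $\{u\}\times[k]$. A direct calculation shows $\chi_L^{\top} M\, \chi_L = 2\cdot(\#\text{constraints satisfied by }L)$: each satisfied constraint $(u,v)$ contributes precisely the edge $((u,L(u)),(v,L(v)))$ of the label-extended graph (counted twice by the bilinear form), while every unsatisfied constraint contributes nothing. Because $\norm{\chi_L}^2=n$ and the total number of constraints is $nd/2$, a $(1-\epsilon)$-satisfying witness $L^*$ produces a vector with Rayleigh quotient $\chi_{L^*}^{\top} M \chi_{L^*} / \norm{\chi_{L^*}}^2 \geq (1-\epsilon)d$.

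Next, I would exploit the spectral gap at $(1-\gamma)d$ to show that $\chi_{L^*}$ is almost entirely supported on $W$. Decomposing $\chi_{L^*} = f + f^\perp$ along $W \oplus W^\perp$ and using $M \preceq dI$ globally together with $M \preceq (1-\gamma)d\,I$ on $W^\perp$,
\[
(1-\epsilon)dn \;\leq\; \chi_{L^*}^{\top} M \chi_{L^*} \;\leq\; d\norm{f}^2 + (1-\gamma)d\norm{f^\perp}^2 \;=\; dn - \gamma d\norm{f^\perp}^2,
\]
so $\norm{f^\perp}^2 \leq (\epsilon/\gamma)\,n$, i.e., $\chi_{L^*}$ lies within $\ell_2$-distance $\sqrt{\epsilon n/\gamma}$ of the low-dimensional subspace $W$. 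The subspace $W$ itself is obtained by a standard eigendecomposition of $M$ in $\text{poly}(nk)$ time.

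Given this concentration, the algorithm enumerates an $\eta$-net $\mathcal{N}$ of $W \cap B(0,\sqrt{n})$; by volume arguments $|\mathcal{N}| \leq (O(\sqrt{n}/\eta))^{\dim(W)}$. For each $v \in \mathcal{N}$ one rounds via $\hat L(u) := \arg\max_{i\in[k]} v(u,i)$ and outputs the labeling satisfying the most constraints. The key rounding observation is that at any cloud $u$ with $\hat L(u)\neq L^*(u)$, the restriction of $v - \chi_{L^*}$ to that cloud has squared norm at least $1/2$ (since the minimum of $(1-a)^2+b^2$ subject to $b \geq a$ is $1/2$). Hence $\norm{v - \chi_{L^*}}^2 \leq \beta n$ forces at most $2\beta n$ mislabelings, each of which can destroy at most $d$ constraints. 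For the net point closest to $f$ one has $\norm{v - \chi_{L^*}} \leq \eta + \sqrt{\epsilon n/\gamma}$, so a suitable choice of $\eta$ delivers both the running time $2^{O((\gamma/\epsilon)\dim(W))}$ and the stated approximation factor.

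The main obstacle is this parameter-tuning/rounding step: one must convert an approximate real-valued vector from $W$ into a combinatorial labeling while carefully tracking how $\ell_2$ error spreads over the edges of $G$, and simultaneously balance the net granularity against the enumeration cost. This tight trade-off is what produces the $(\gamma - 8\epsilon)$ denominator and the additive $+\epsilon$ in the advertised bound, and making the quantitative rounding lemma sharp (in particular, ruling out pathological $v$ where the $1/2$ lower bound per mislabeled cloud is not enough to exclude a globally bad candidate) is where the bulk of the technical work lies.
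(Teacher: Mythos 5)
Your proposal is correct and follows the same overall strategy as the paper: find the subspace $W$, show the indicator of a good labeling is nearly contained in $W$, enumerate an $\varepsilon$-net of $W$, and round each net point to a labeling via per-cloud argmax. The details, however, differ in two ways worth noting.

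First, the paper introduces a ``completion'' $\widetilde{\mathcal{U}}$ (a perfectly satisfiable game obtained by redefining the constraints on unsatisfied edges) and works with the eigenvector $y^{(\mathcal{L})}$ of $\widetilde{M}$; the projection estimate is then done via the decomposition $y^{(\mathcal{L})}=a\,v_{\mathcal L}+b\,(v_{\mathcal L})_\perp$ and the argmax analysis is carried out on $\alpha y^{(\mathcal{L})}+\beta y^{(\mathcal{L})}_\perp$. You instead work directly with the indicator $\chi_{L^*}$ of a $(1-\epsilon)$-satisfying labeling, observe $\chi_{L^*}^\top M\chi_{L^*}\geq(1-\epsilon)dn$, and bound $\|f^\perp\|^2\le(\epsilon/\gamma)n$ by a one-line Rayleigh quotient argument. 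This sidesteps the completion machinery entirely (for this theorem it is cosmetic; the paper reuses it for the $\Gamma$-Max-Lin perturbation argument). Second, your rounding lemma is block-local in $\ell_2$: if $\hat L(u)\neq L^*(u)$, then $\|(v-\chi_{L^*})_u\|^2\ge 1/2$, so the number of mislabeled clouds is at most $2\|v-\chi_{L^*}\|^2$. The paper's Claim~\ref{cl:uniquemax} proves the same count in the $\alpha/\beta$ parametrization, $\le\tfrac{2\beta^2}{\alpha^2}n$, and the $\alpha^2=1-\beta^2$ normalization is exactly what introduces the $(\gamma-8\epsilon)$ denominator. Your version gives the slightly cleaner $O(\epsilon/\gamma+\epsilon)$, which trivially implies the stated $O(\epsilon/(\gamma-8\epsilon)+\epsilon)$.

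One remark on your closing worry: there is no ``pathological $v$'' to rule out. The algorithm outputs the \emph{best} labeling over all net points, and the analysis only needs one witness net point (the one closest to $f$) to round to a good labeling. Bad net points are simply discarded by the final argmax over constraints satisfied, so the per-cloud $1/2$ bound is already sufficient, and there is no further technical work hiding there. The only thing you left implicit is the concrete choice $\eta\approx\sqrt{\epsilon n/\gamma}$, which gives net size $(O(\sqrt{\gamma/\epsilon}))^{\dim W}=2^{O(\dim W\log(\gamma/\epsilon))}\le 2^{O((\gamma/\epsilon)\dim W)}$, matching (and in fact beating) the paper's stated running time.
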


We also show a similar theorem for the special case where all the constraints are of the form $\Gamma$-Max-Lin and the graph satisfies some special properties. That theorem is then used to derive the above-mentioned algorithm for expanders.

 Our key contribution is a technique of effectively using the full spectrum and eigenvectors of the relevant graph rather than just the first few eigenvalues. Interestingly, prior to this work, researchers had often attempted to take advantage of the full spectrum of a graph in the design of approximation algorithms, but no significant progress was made. We believe that our techniques are of independent interest and could contribute to developing better algorithms for a number of other problems. In a recent breakthrough paper, Arora, Barak and Steurer \cite{ABS} used our main technique as a crucial building block in their sub-exponential time algorithm for arbitrary instances of Unique Games. Consequently, we believe that our results
give evidence that spectral techniques might be a more powerful tool than SDPs for attacking and, potentially even disproving, the Unique Games Conjecture.

\subsection{Comparison to Other Work on Unique Games}

Several polynomial time approximation algorithms using linear and semidefinite programming have been developed for Unique Games on arbitrary graphs (see \cite{KhotUCSP}, \cite{Trevisan}, \cite{GT}, \cite{CharikarMM}, \cite{ChlamtacMM}). These algorithms
start with an instance where the value of the SDP or LP relaxation is
$1 - \epsilon$ and round it to a solution with value $\nu$. Here, value of the game
refers to the maximum fraction of satisfiable constraints. For $\nu > \delta$, this would give
an algorithm to distinguish between the two cases. However, most of these algorithms give good
approximations only when $\epsilon$ is very small ($\epsilon = O(1/\log n)$ or $\epsilon = O(1/\log k)$) and  their approximation guarantee depends on the alphabet size $k$
\footnote{It might be good to think of $k$ as $O(\log n)$ since this is the range of interest for most reductions.}.
For constant $\epsilon$, only the algorithm in \cite{CharikarMM} gives interesting parameters with
$\nu \approx k^{-\epsilon/(2-\epsilon)}$. We refer the reader to \cite{CharikarMM} for a comparison
of parameters of various algorithms.

 For some special cases of graphs, it has been shown that there are efficient algorithms that solve Unique Games. One such example
is when the constraint graph is a \textit{spectral expander}. In that case, the authors in \cite{AKKTSV} (and later \cite{MM} with improved parameters) showed that
one can find a highly
satisfying assignment (with, say, $\nu \geq 90\%$) in polynomial time. Notably, the approximation guarantee of their algorithm depends on the
expansion parameters of the graph rather than the alphabet size. As another example,
the authors in \cite{AIMS} presented efficient algorithms for constraint graphs with large {\em local expansion}.

Compared to the previous papers, our algorithm uses purely spectral methods as opposed to linear or semidefinite programming and runs in time that depends on spectral properties of the constraint graph or the label-extended graph. The approximation guarantee does not depend on the size of the graph or the size of the alphabet and thus our algorithm always finds a good assignment when one exists.
We note that, in an unpublished manuscript, Kolla and Tulsiani \cite{KT} developed a {\em spectral} algorithm that runs in polynomial time and finds highly satisfying
assignments on expander constraint graphs.

\subsection{Overview of Our Techniques: Recovering Solutions by Spectral Methods}

Our basic approach for the Unique Games algorithm is exhaustive search in a subspace spanned by several eigenvectors of a graph associated with the Unique Games instance. We identify a ``good'' subspace which contains a vector that
 ``encodes'' a highly satisfying assignment and then exhaustively search for this vector in (a discretization of) the subspace as we explain below.

Let $\mathcal{U}$ be our Unique Games instance which is $(1-\epsilon)$ satisfiable. One can think of this instance as ``coming from'' a completely satisfiable instance $\mathcal{\tilde{U}}$ as follows: an adversary on input $\mathcal{\tilde{U}}$,
 picks an $\epsilon$ fraction of edges and changes the constraints on those edges so that the resulting instance $\mathcal{U}$ becomes $(1-\epsilon)$ satisfiable.
 Let $M$ be (the adjacency matrix of) the graph corresponding to $\mathcal{U}$ and $\widetilde{M}$ (the adjacency matrix of) the graph corresponding to $\mathcal{\tilde{U}}$. Assume, for simplicity, that these graphs are $d$-regular. We first observe that a characteristic vector $y_\alpha$ of a perfectly satisfying assignment $\alpha$ for $\mathcal{\tilde{U}}$ is an eigenvector of $\widetilde{M}$ with eigenvalue $d$.
 
 We next consider a space $W$ which is the span of the eigenvectors of $M$ with
eigenvalue very close to $d$ and show that every characteristic vector $y_\alpha$ is close to some vector in $W$ (in $\ell_2$  norm).

 Our algorithm simply looks at a set
$\mathcal{S} \subseteq W$ of appropriately many candidate vectors and ``reads-off"
an assignment.

{\large $\texttt{Recover-Solution}_\mathcal{S}(\mathcal{U})$}
\begin{itemize}
\item For each $x \in \mathcal{S}$, construct an assignment $L_x$ by assigning
to each vertex $u$, the index corresponding to the largest entry in the
block $(x_{u1}, \ldots, x_{uk})$ i.e. $L_x(u) = \arg \max_i{x_{ui}}$.
\item Out of all assignments $L_x$ for $x \in \mathcal{S}$, choose the one satisfying
the maximum number of constraints.
\end{itemize}

To choose $\mathcal{S}$,we identify
a set of \textit{nice} vectors $\mathcal{N}\subseteq W$ such that the above algorithm works for any vector $v$
close to some vector in $\mathcal{N}$. These are going to be the vectors that are close to the assignment eigenvectors.
 We then construct a set $\mathcal{S} \subseteq W$ of test vectors
such that at least one vector  $v\in \mathcal{S}$ is close to a vector in $\mathcal{N}$. $\mathcal{S}$ is simply an epsilon-net for $W$. Lastly, we go over all vectors in $\mathcal{S}$ until we find $v$. The running time of the algorithm will depend on the size of $\mathcal{S}$ which, for an appropriately defined epsilon-net, it is exponential in the dimension of $W$.

Certain ``simple'' graphs, including expander graphs and the Khot-Vishnoi instance, have only
a few eigenvalues close to $d$ and thus the dimension of $W$ is small. Consequently, exhaustive enumeration in the subspace spanned by the
corresponding eigenvalues will quickly find a good-enough assignment.

\subsection{Organization of the Paper} The rest of the paper is organized as follows: in section~\ref{sec:prelim} we give some preliminary notation and definitions that will be used in the rest of the paper. Section~\ref{sec:main} contains the proofs of the main theorem above as well as of a theorem for the $\Gamma$-Max-Lin case. More specifically, we prove the main theorem in subsection~\ref{sec:31}. The proof of the theorem for the $\Gamma$-Max-Lin case appears in subsection~\ref{subsec:gamma_max_lin1}. Subsection~\ref{subsec:non_reg} contains the generalization of the main theorem for non-regular graphs. In section~\ref{sec:examples}, we analyze our algorithm when run on the Khot-Vishnoi instance.

\section{Preliminaries}
\label{sec:prelim}
\subsection{Spectra of Graphs}

We remind the reader that for a graph $G$, the adjacency matrix
$A=A_G$ is defined as
\begin{align*}
A_G(u,v) = \left\{
\begin{array}{ll}
1 & \text{if}~(u,v) \in E\\
0 & \text{if}~ (u,v) \notin E\\
\end{array}
\right.
\end{align*}
We will also be dealing with weighted graphs, where edge $(u,v)$ has weight $w_{uv}\geq 0$ and the adjacency matrix is defined as $A_G(u,v)=w_{uv}$.
We will also assume, w.l.o.g. that $w_{uv} \leq 1$ (if not, we just re-scale the weights of the edges by the maximum weight).
If the graph has $n$ vertices, $A_G$ has $n$ real eigenvalues
$\lambda_1 \geq \lambda_2 \geq \cdots \geq \lambda_n$. We can always choose $n$ eigenvectors $\gamma_1,\cdots{},\gamma_n$ such that $\gamma_i$ has eigenvalue $\lambda_i$ which form an orthonormal basis of
$\mathbb{R}^n$. We note that if the graph is $d$-regular (the total weight of edges adjacent to every node is $d$) then the
largest eigenvalue is equal to $d$ and the corresponding unit length eigenvector
is the (normalized) all-one's vector.\\

For a graph $G$, let $D$ be the diagonal matrix with diagonal entry $D(u,u) =\sum_{v:(u,v)\in E}w_{uv}$ equal to the degree of node $u$, namely the sum of the weights of edges adjacent to $u$. The Laplacian of $G$ is defined as follows:
 $$L_G =D-A_G$$

If the graph has $n$ vertices, $L_G$ has $n$ real eigenvalues
$0=\mu_1 \leq \mu_2 \leq \cdots \leq \mu_n$. We can always choose $n$ eigenvectors $\gamma_1,\cdots{},\gamma_n$ such that $\gamma_i$ has eigenvalue $\mu_i$ which form an orthonormal basis of
$\mathbb{R}^n$. We note that 0 is always an eigenvalue with corresponding unit length eigenvector the (normalized) all-one's vector. Moreover, if and only if the graph has $k$ connected components, then $L_G$ has $k$ eigenvalues equal to zero.\\

\subsection{The Label-Extended Graph}
For a given instance of Unique Games on a constraint graph $G=(V,E)$, let $A$ be the adjacency matrix of $G$ and let $M$ denote the $nk \times nk$ matrix such that the $k \times k$
block $M_{uv}$ is equal to $w_{uv}\Pi_{uv}$. We use $\Pi_{uv}$ to denote the $k\times k$ matrix of the permutation $\pi_{uv}$. Namely,

\[ \Pi_{uv}(i,j) = \left\{ \begin{array}{ll}
         1 & \mbox{if $\pi_{uv}(i)=j$}\\
        0 & \mbox{otherwise}\end{array} \right. \]

Note that $\pi_{uv} = (\pi_{vu})^{-1}$ implies $\Pi_{uv} ={\Pi_{vu}}^T$ therefore $M$ is symmetric.
 We can view $M$ as the adjacency matrix of the {\em Label-Extended} graph for that instance of
Unique Games. Let $k$ denote the size of the alphabet. We will denote this instance by $\mathcal{U}= (G,M,k)$.\\

\begin{definition}(Characteristic vector of a labeling)
For any labeling of the vertices $L=\{L_u\}_{u\in V}$ with $L_u \in [k]$ we define the \textit{characteristic vector} of that labeling as the
 $nk$ dimensional vector $y^{L}$ with
 \[y^{L}(u,i) = \left\{ \begin{array}{ll}
1 &\mbox{if $i=L_u$}\\
        0 & \mbox{otherwise}\end{array} \right. \]

        We will often normalize such vectors to make them unit vectors. In this case, we have
        \[\tilde{y}^{L}(u,i) = \left\{ \begin{array}{ll}
\frac{1}{\sqrt{n}} &\mbox{if $i=L_u$}\\
        0 & \mbox{otherwise}\end{array} \right. \]
        With some abuse of notation we will refer to both vectors as characteristic vectors of labelings whenever it is clear from the context.
\end{definition}

The following is an important observation:

\begin{observation}\label{obs:evectors}
Assume that a certain instance $\mathcal{U}$ is completely satisfiable. Namely, assume that there is a labeling $L=\{L_u\}_{u\in V}$ with $L_u \in [k]$ such that, for all $(u,v)\in E$ we have $\pi_{uv}(L_u)=L_v$. Then the characteristic vector $y^{L}$ is a eigenvector of $M$ with eigenvalue $d$, if $G$ is a $d$-regular graph. It is also an eigenvector of the laplacian $L_M$ with eigenvalue $0$.
\end{observation}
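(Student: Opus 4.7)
The plan is to verify the eigenvector claim by a direct block-by-block computation of $M y^{L}$ and then deduce the Laplacian statement from regularity.

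First, I would unfold the definition of $M$ and compute the $(u,i)$ entry of $M y^{L}$ as
\[
(M y^{L})(u,i) \;=\; \sum_{v:(u,v)\in E} \sum_{j\in[k]} w_{uv}\,\Pi_{uv}(i,j)\, y^{L}(v,j).
\]
Since $y^{L}(v,j)$ is nonzero only for $j=L_v$, this collapses to $\sum_{v:(u,v)\in E} w_{uv}\,\Pi_{uv}(i,L_v)$. The key step is to exploit the assumption that $L$ satisfies every constraint, so $\pi_{uv}(L_u)=L_v$ for each edge. Because $\pi_{uv}$ is a permutation, $\pi_{uv}(i)=L_v$ if and only if $i=L_u$, hence $\Pi_{uv}(i,L_v)=\mathbf{1}[i=L_u]$. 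The sum therefore becomes $\mathbf{1}[i=L_u]\sum_{v:(u,v)\in E} w_{uv}$, and by the $d$-regularity of $G$ (here in the sense that the total incident edge weight at every vertex is $d$) this equals $d\cdot y^{L}(u,i)$. Since this holds for every coordinate $(u,i)$, I conclude $M y^{L} = d\, y^{L}$.

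For the Laplacian statement, I would observe that the label-extended graph is itself $d$-regular: the weighted degree of the node $(u,i)$ is
\[
\sum_{v,j} M\bigl((u,i),(v,j)\bigr) \;=\; \sum_{v:(u,v)\in E} w_{uv}\sum_{j} \Pi_{uv}(i,j) \;=\; \sum_{v:(u,v)\in E} w_{uv} \;=\; d,
\]
using that each row of a permutation matrix sums to $1$. Hence $D_M = d\,I$ and $L_M = D_M - M = dI - M$, so $L_M y^{L} = d\,y^{L} - My^{L} = 0$, i.e.\ $y^{L}$ lies in the kernel of $L_M$.

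There is no real obstacle here; the only subtlety worth flagging in the write-up is the use of the permutation property of $\pi_{uv}$ to turn the edge constraint $\pi_{uv}(L_u)=L_v$ into the biconditional ``$\pi_{uv}(i)=L_v\iff i=L_u$,'' which is precisely what makes the block action of $M$ behave as a coordinate permutation on characteristic vectors. The same argument goes through without change for the normalized unit vector $\tilde y^{L}$, since scalar multiples are eigenvectors with the same eigenvalue.
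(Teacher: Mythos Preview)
Your proof is correct; the paper states this as an observation without proof, and your direct block-by-block computation is precisely the routine verification the paper leaves implicit. The only minor remark is that the paper defines $M_{uv}$ as the full $k\times k$ block even for non-edges (where it is zero), so strictly speaking the outer sum is over all $v$, but this makes no difference.
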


\subsection{$\Gamma$-Max-Lin Instances}
Let $\Gamma$ be an abelian group. A $\Gamma$-Max-Lin instance of Unique Games on graph $G(V,E)$ has for each edge $(u,v)\in E$ a constraint of the form $x_u - x_v = c_{uv}$,
where $x_u, x_v$ are variables taking values in $\Gamma$ and $c_{uv} \in \Gamma$. The alphabet $k$ is the size of the group $\Gamma$. We note that if a pair of labels $(L_u,L_v) \in \Gamma \times \Gamma$ satisfies a constraint $x_u - x_v = c_{uv}$ of the above form, then for all $i \in \Gamma$, the pair $(L_u +i,L_v+i) \in \Gamma \times \Gamma$ also satisfies the constraint. This implies that any labeling $\{L_u\}_{u \in V}$ of the vertices of $G$ is shift-invariant. Namely, for all $i\in \Gamma$, the labelings $\{L_u\}_{u \in V}$ and $\{L_u+i\}_{u \in V}$ satisfy the same set of constraints.

\subsection{The Spectrum of Cayley Graphs}
For the purposes of this paper, we will use a generalized definition for Cayley graphs. For background on the standard definition and properties of Cayley graphs see, for example, \cite{Kaski,Spi_Notes}.

\begin{definition}\label{def:Cayley}(Adjacency matrix of a Cayley graph)
Let $\Omega$ be a group with $n$ elements. The vertex set of a Cayley graph $C(\Omega)$ is the group $\Omega$. The adjacency matrix of $C(\Omega)$ is the $n \times n$ matrix $A_{C(\Omega)}$
defined as follows: For $g_1,g_2\in \Omega$, $A_{C(\Omega)}(g_1,g_2) =f(g_1-g_2)$, where ``$-$'' refers to the inverse group operation and $f:\Omega \rightarrow {\mathbb{R}}_{+}$. Note that the above definition allows
$C(\Omega)$ to have weighted edges, with the only constraint being that the weight of an edge between $g_1$ and $g_2$ depends on their group-theoretic difference.
\end{definition}

We note that the standard definition of a Cayley graph, given a set $S$ of generators, is a special case of the above, where $f(g_1-g_2) =1$ if $g_1-g_2 \in S$ and $0$ otherwise.

For what follows, we will only be interested in $\Omega$ being abelian. In the rest of this section, for simplicity, we will identify a vector $f$ in $\mathbb{R}^n$ with the function $f:\Omega \rightarrow \mathbb{R}$ and we will use the two interchangeably.\\

 It is well known, that for the abelian case $A_{C(\Omega)}$ has an eigenbasis that consists of the $n$ group-theoretic {\em characters} $\chi_g, g\in \Omega$. We remind the reader that every function $f: \Omega \rightarrow \mathbb{R}$ can be written as follows:
$f(x) = \sum_{g\in \Omega}\hat{f}(g)\chi_g(x)$ where $\hat{f}(g)$ is the \textit{fourier coefficient} that corresponds to character $\chi_g$.
For more background on characters and fourier analysis we refer the reader to standard algebra textbooks and surveys see, for example \cite{art}, \cite{nati}, \cite{Kaski}.
\\
Since definition~\ref{def:Cayley} is slightly different than the usual, we present a proof of the following lemma for completeness.

\begin{lemma}\label{lem:cayley}
Let $C(\Omega)$ be  Cayley graph of an abelian group $\Omega$, as in definition~\ref{def:Cayley}, and $A_{C(\Omega)}$ be its adjacency matrix.
Then for every $g\in\Omega$, $A_{C(\Omega)}\chi_g=|\Omega|\hat{f}(g)\chi_g$. Namely, every character $\chi_g$ is an eigenvector of $A_{C(\Omega)}$ and $|\Omega|\hat{f}(g)$ is its corresponding eigenvalue. Here, with $\hat{f}$ we denote fourier coefficients.
\end{lemma}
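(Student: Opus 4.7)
The plan is a direct computation, substituting the definition of the adjacency matrix and then exploiting the multiplicative (homomorphism) property of the characters to pull a character evaluated at $x$ outside the sum.

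First I would unwind the matrix-vector product at an arbitrary element $x \in \Omega$:
\[
(A_{C(\Omega)} \chi_g)(x) \;=\; \sum_{y \in \Omega} A_{C(\Omega)}(x,y)\, \chi_g(y) \;=\; \sum_{y \in \Omega} f(x-y)\, \chi_g(y).
\]
Next I would perform the change of variable $z = x - y$, so that $y = x - z$ and $z$ ranges over all of $\Omega$ as $y$ does (here I use that $\Omega$ is a group, so translation is a bijection). This yields
\[
(A_{C(\Omega)} \chi_g)(x) \;=\; \sum_{z \in \Omega} f(z)\, \chi_g(x - z).
\]

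The key algebraic step uses the fact that every character $\chi_g$ of an abelian group is a group homomorphism into $\mathbb{C}^{*}$ taking values on the unit circle, so $\chi_g(x-z) = \chi_g(x)\,\chi_g(-z) = \chi_g(x)\,\overline{\chi_g(z)}$. Factoring $\chi_g(x)$ out of the sum gives
\[
(A_{C(\Omega)} \chi_g)(x) \;=\; \chi_g(x) \sum_{z \in \Omega} f(z)\, \overline{\chi_g(z)}.
\]
By the standard definition of the Fourier coefficient on an abelian group, $\hat{f}(g) = \tfrac{1}{|\Omega|}\sum_{z \in \Omega} f(z)\, \overline{\chi_g(z)}$, so the inner sum equals $|\Omega|\hat{f}(g)$, which produces the claimed identity $A_{C(\Omega)} \chi_g = |\Omega|\hat{f}(g)\,\chi_g$ pointwise at every $x$.

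There is no real obstacle; the only subtlety worth flagging is the normalization convention for $\hat{f}$, since without the $\tfrac{1}{|\Omega|}$ factor in the definition of the Fourier coefficient the eigenvalue would come out as $\hat{f}(g)$ rather than $|\Omega|\hat{f}(g)$. I would therefore state the convention explicitly at the start of the proof to match the statement of the lemma. Everything else (abelian-ness giving commuting characters, the bijectivity of translation on $\Omega$, and $\chi_g(-z) = \overline{\chi_g(z)}$ from $|\chi_g(z)|=1$) is standard and requires no additional work.
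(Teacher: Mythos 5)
Your proof is correct and matches the paper's argument essentially line for line: both expand $(A_{C(\Omega)}\chi_g)(x)$, substitute $z=x-y$, factor out $\chi_g(x)$ using the homomorphism property, and identify the remaining sum with $|\Omega|\hat f(g)$. The only cosmetic difference is that the paper writes $\chi_g(-z)$ where you write $\overline{\chi_g(z)}$, which are equal, and your explicit flagging of the Fourier normalization convention is a sensible addition.
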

\begin{proof}
It is enough to prove the equality for each entry of $A_{C(\Omega)}\chi_g$, that is for every vertex $x\in \Omega$ we will show
$\{A_{C(\Omega)}\chi_g\}_x = |\Omega|\hat{f}(g)\chi_g(x)$. We calculate:
$$\sum_{y\in \Omega}A_{C(\Omega)}(x,y)\chi_g(y) =\sum_{y\in \Omega}f(x-y)\chi_g(y)=\sum_{z\in \Omega}f(z)\chi_g(x)\chi_g(-z) =|\Omega|\hat{f}(g)\chi_g(x)$$
\end{proof}

\subsection{The Khot-Vishnoi Graph}

In \cite{KV}, the authors considered the following family of graphs:\\

For parameters $n$ and $\epsilon$, let $N=2^n$ and $n=2^k$. Denote by $\mathcal{F}$ the family of all boolean functions on $\{-1,1\}^k$. For
$f,g \in \mathcal{F}$ define the product $fg$ as $(fg)(x) =f(x)g(x)$. Let $H=\{\chi_S | S\subseteq [k]\}$ be the set of characters of the group $\mathbf{F}_2^n$. Consider the equivalence relation $\equiv $ on $\mathcal{F}$ defined as
 $f\equiv g$ iff there is an $S\subseteq [k]$ such that $f=g\chi_S$. This relation partitions $\mathcal{F}$ into equivalence classes $\mathcal{P}_1,\cdots,\mathcal{P}_m$,
 where $m=\frac{N}{n}=\frac{2^n}{n}$. For each equivalence class $\mathcal{P}_i$, we could pick an arbitrary representative $p_i\in \mathcal{P}_i$,
so that $$\mathcal{P}_i=\{p_i\chi_S|S\subseteq [k]\}$$

For $0<\epsilon\leq 1$ let $f\in_\epsilon \mathcal{F}$ denote a random boolean function on $\{-1,1\}^k$
where for every $x\in \{-1,1\}^k$, independently, $f(x)=1$ with probability $1-\epsilon$ and $f(x) =-1$ with probability $\epsilon$.
For the given parameter $\epsilon$ and boolean functions $f,g \in \mathcal{F}$ let $$\mathbf{wt}_\epsilon(\{f,g\}):=\textbf{Pr}_{f' \in_{1/2}\mathcal{F}, \mathbf{\mu} \in_{\epsilon}\mathcal{F}}\big[(\{f=f'\}\wedge \{g=f'\mathbf{\mu}\})\vee(\{g=f'\}\wedge \{f=f'\mathbf{\mu}\})\big]$$\\

The $\KV_{n,\epsilon}$ constraint graph with parameters $n$ and $\epsilon$ can now be defined to have vertex set $V=\{\mathcal{P}_1,\cdots,\mathcal{P}_m\}$. For every $f\in \mathcal{P}_i$ and $g\in \mathcal{P}_j$, there
is an edge between the vertices $\mathcal{P}_i$ and $\mathcal{P}_j$ with weight
$\mathbf{wt}_\epsilon(\{f,g\})$\\

We next describe the set of constraints. The set of labels for the instance will be $2^{[k]}:= \{S:S\subseteq [k]\}$, i.e. the set of labels $[n]$ is identified with the set $2^{[k]}$ (and, thus, $n=2^k$). This identification will be used from now on. Let $f \in \mathcal{P}_i$ with $f = p_i\chi_S$ and $g \in \mathcal{P}_j$ with $g = p_i\chi_T$ for some $S,T \subseteq [k]$. The constraint $\pi_e$ for the edge $e\{f,g\}$ corresponding to the pair of functions $f,g$ can now be defined as $$\pi_e^{\mathcal{P}_i}(T\vartriangle U) = S\vartriangle U, \quad \forall U \subseteq [k]$$
Here, $\vartriangle$ is the symmetric difference operator on sets.\\

 We denote the label-extended graph of the instance described above with $\widetilde{\KV}_{n,\epsilon}$. The graph $\widetilde{\KV}_{n,\epsilon}$ has vertex set $\{-1,1\}^n$. Between two vertices $f,g \in \{-1,1\}^n$ there is an edge with weight $n\cdot \mathbf{wt}_\epsilon(\{f,g\})$.\\

It will be useful, for the purposes of this paper, to consider an equivalent definition of $\KV_{n,\epsilon}$ and $\widetilde{\KV}_{n,\epsilon}$, by translating to the $\{0,1\}$ language.
Let $H_n = \{0,1\}^n$ be the group $\mathbf{F}_2^n$ with addition (modulo 2) as group operation. Let $n=2^k$. Then the set $H$ is the {\em Hadamard} code on $n$ bits. Namely, $$H =\left\{h_y \in \{0,1\}^n, y \in \{0,1\}^k |\text{s.t. } h_y(x) =\langle x,y \rangle, x \in \{0,1\}^k \right\}$$ Note that $|H| =n$ and $H$ is a subgroup of $H_n$.\\
 The graph $\KV_{n,\epsilon}$ is just a Cayley graph of the quotient group of $H_n$ by $H$, namely of the group $Q=H_n\diagup H$. The vertex set $V=\{\mathcal{P}_1,\cdots,\mathcal{P}_m\}$ consists of
the $\frac{N}{n}$ cosets of $H$. For every $f$ in the coset $\mathcal{P}_i$ and for every $g$ in the coset $\mathcal{P}_j$, there
is an edge between the vertices $\mathcal{P}_i$ and $\mathcal{P}_j$ with weight
$\mathbf{wt}_\epsilon(\{f,g\})$. For simplicity of notation, we will identify the cosets with their representatives, i.e. if $x=\mathcal{P}_i$ is a vertex of $\KV_{n,\epsilon}$, then we will use $x$ to refer both to the coset $\mathcal{P}_i$ as well as to its representative $p_i$.\\

 The total weight of all the edges between two cosets $x=\mathcal{P}_i$ and $y=\mathcal{P}_j$ is
$$A(x,y) =\sum_{f \in \mathcal{P}_i, g\in \mathcal{P}_j}\epsilon^{|f+g|}(1-\epsilon)^{n-|f+g|}= \sum_{h_1,h_2 \in H}\epsilon^{|x+h_1+y+h_2|}(1-\epsilon)^{n-|x+h_1+y+h_2|}$$
Note that the graph is $n$-regular, namely the total weight of edges adjacent to any node $x$ is
\begin{eqnarray*}
 &&\sum_{y\in Q} \sum_{h_1,h_2 \in H}\epsilon^{|x+h_1+y+h_2|}(1-\epsilon)^{n-|x+h_1+y+h_2|} =n\sum_{y\in Q}\sum_{h_2 \in H}\epsilon^{|x+y+h_2|}(1-\epsilon)^{n-|x+y+h_2|}\\
&=&n \sum_{z\in H_n}\epsilon^{|z|}(1-\epsilon)^{n-|z|}=n\sum_{i=0}^{n}{n \choose i }\epsilon^i (1-\epsilon)^{n-i} =n(1-\epsilon+\epsilon)^n=n
\end{eqnarray*}

The label-extended graph $\widetilde{\KV}_{n,\epsilon}$ is a Cayley graph of $H_n$, since for every two vertices $u,v \in \{0,1\}^n$ we have that the corresponding entry of the adjacency matrix of $\widetilde{\KV}_{n,\epsilon}$ depends only on the hamming distance of $u$ and $v$, or, in other words, on $u-v=u+v$ (modulo 2). Moreover, $\widetilde{\KV}_{n,\epsilon}$ can be described as an ``$\epsilon$-perturbed'' version of the standard hypercube graph: the weight between two vertices $u,v \in \{0,1\}^n$ is proportional to the probability that if one starts from the $n$-bit string $u$ and flips each one of its bits independently with probability $\epsilon$, then one gets the $n$-bit string $v$.  \\
The following appears in \cite{KV}. We give the informal statement here, for simplicity. We refer the reader to \cite{KV} for the full statement and details.
\begin{claim}(Integrality Gap for Unique Games, Informal Statement) \label{cl:KV}
Let $n$ be an integer and $\epsilon>0$ be a parameter. Then for the instance of Unique Games described above, with label-extended graph $\widetilde{\KV}_{n,\epsilon}$ the following hold:
\begin{itemize}
\item Every labeling $L:V \rightarrow [n]$ satisfies {\em at most} a $\frac{1}{n^{\epsilon}}$ fraction of the total weight of the edges.
\item The standard SDP relaxation for Unique Games has objective value greater than $1-9\epsilon$.
\end{itemize}
\end{claim}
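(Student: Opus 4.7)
The plan is to prove both items of the claim by following the Khot--Vishnoi construction and decomposing the argument into a soundness part (no labeling is good) and a completeness part (the SDP relaxation is good). The two parts are proved independently by different techniques: soundness uses hypercontractivity on the Boolean hypercube, while completeness exhibits an explicit feasible vector solution built from the Hadamard/Fourier basis. Throughout, I would work with the Cayley-graph description from the excerpt, in which $\KV_{n,\epsilon}$ is a quotient of the $\epsilon$-noisy hypercube $\widetilde{\KV}_{n,\epsilon}$ by the Hadamard subgroup $H$.

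For the soundness part (no labeling satisfies more than a $1/n^{\epsilon}$ fraction of edges): any labeling $L\colon V\to 2^{[k]}$ picks, from each coset $\mathcal{P}_i$, a single distinguished representative $p_i\chi_{L(\mathcal{P}_i)}\in\{0,1\}^n$. Let $A^*\subseteq\{0,1\}^n$ be the union of these chosen representatives, a set of density $\mu=m/N=1/n$. The constraint $\pi_e(T\vartriangle U)=S\vartriangle U$ is shift-invariant in $U$, and unwinding the definition shows that the edge $\{\mathcal{P}_i,\mathcal{P}_j\}$ of $\KV_{n,\epsilon}$ is satisfied iff the specific pair $(p_i\chi_{L(\mathcal{P}_i)},\,p_j\chi_{L(\mathcal{P}_j)})$ is an edge in $\widetilde{\KV}_{n,\epsilon}$. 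Counting weights carefully using the $n$-regularity of both graphs, the fraction of satisfied constraints equals exactly $n\cdot\mathbf{Stab}_{1-2\epsilon}(\mathbf{1}_{A^*})$, where $\mathbf{Stab}_\rho(f)=\langle f,T_\rho f\rangle$ is the standard hypercube noise stability. The Bonami--Beckner hypercontractive inequality then yields
\[
\mathbf{Stab}_{1-2\epsilon}(\mathbf{1}_{A^*}) \;\leq\; \mu^{2/(2-2\epsilon)} \;=\; n^{-1/(1-\epsilon)} \;\leq\; n^{-(1+\epsilon)},
\]
and multiplying by $n$ gives the desired $n^{-\epsilon}$ bound.

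For the completeness part (SDP value $\geq 1-9\epsilon$): I would exhibit an explicit feasible vector solution. For each vertex $\mathcal{P}_i$ with chosen coset representative $p_i$ and each label $S\subseteq[k]$, assign
\[
v_{\mathcal{P}_i,\,S} \;=\; \tfrac{1}{\sqrt{N}}\,p_i\chi_S \;\in\;\mathbb{R}^N.
\]
For each fixed $i$, the family $\{v_{\mathcal{P}_i,S}\}_{S\subseteq[k]}$ is orthonormal because $\{\chi_S\}_S$ is an orthonormal Fourier basis of $\mathbb{R}^N$ and coordinate-wise multiplication by $p_i\in\{-1,1\}^N$ preserves this. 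Plugging into the standard Unique Games SDP, the contribution of the edge between cosets represented by $f=p_i\chi_S$ and $g=p_j\chi_T$ reduces (after summing over the admissible label pairs selected by $\pi_e$) to the correlation $\tfrac{1}{N}\sum_x f(x)g(x)=1-2|f+g|/N$. Since the edge-weight $\mathbf{wt}_\epsilon(\{f,g\})$ concentrates on Hamming distances $|f+g|\leq(1+o(1))\epsilon N$ by a Chernoff bound, averaging yields an SDP value of at least $1-O(\epsilon)$, and careful constant-tracking gives the claimed $1-9\epsilon$.

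The main technical obstacle is the soundness argument, and specifically the bookkeeping that translates ``fraction of satisfied $\KV_{n,\epsilon}$ constraints'' into the clean noise-stability quantity $\mathbf{Stab}_{1-2\epsilon}(\mathbf{1}_{A^*})$ with the correct factor of $n$ out front. One has to keep track of the $n$-fold blowup between $\KV_{n,\epsilon}$ and $\widetilde{\KV}_{n,\epsilon}$, the shift-invariance of the constraints under the Hadamard subgroup $H$, and the fact that each $\KV$-edge corresponds to $n$ different ``satisfying'' pairs in $\widetilde{\KV}$ under shifts. Once this translation is in place, applying Bonami--Beckner with $\rho=1-2\epsilon$ and density $\mu=1/n$ does all the remaining work.
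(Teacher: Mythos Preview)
The paper does not actually prove this claim: it is quoted from \cite{KV} with the explicit remark ``We refer the reader to \cite{KV} for the full statement and details,'' and no argument is given. So there is no ``paper's own proof'' to compare against.

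That said, your sketch is a faithful outline of the original Khot--Vishnoi argument and is correct at the level of a proof proposal. The soundness part is exactly the standard reduction: a labeling selects one point per coset, yielding a set $A^*$ of density $1/n$ in the hypercube; the fraction of satisfied constraints is $n\cdot \mathbf{Stab}_{1-2\epsilon}(\mathbf{1}_{A^*})$; and the hypercontractive (small-set expansion) bound $\mathbf{Stab}_\rho(\mathbf{1}_A)\leq \mu(A)^{2/(1+\rho)}$ with $\rho=1-2\epsilon$ and $\mu=1/n$ gives at most $n^{-\epsilon}$. The completeness part via the explicit vector solution $v_{\mathcal{P}_i,S}=\tfrac{1}{\sqrt N}\,p_i\chi_S$ is also the Khot--Vishnoi construction; the only place you are vague is the final constant, where ``careful constant-tracking gives $1-9\epsilon$'' hides a short Chernoff/averaging computation that in \cite{KV} is carried out explicitly. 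If you were to flesh this out, that is the one step to write in full.
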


\subsection{Additional Notation}
For simplicity of the presentation, we will assume that $G$ is a regular graph, namely $\sum_v w_{uv} =d$ for all nodes $u$. At the end of the next section, we show that the results can easily be generalized to
non-regular graphs.

For the sake of convenience, in the rest of the paper we will often use the term \textit{eigenspace} to refer to the space spanned by a set of eigenvectors that don't necessarily have the same eigenvalue. The relevant set of eigenvectors that span this space will always be clear from the context. We use $\text{poly}(n)$ to refer to some polynomial function in $n$.

We also use the following notation for the time needed to compute eigenvalues and eigenvectors:
\begin{definition}\label{defn:TW}
Let $\mathcal{U}=(G,M,k)$ be an instance of Unique Games on a graph $G$ on $n$ nodes and with alphabet size $k$. Let $M$ be the adjacency matrix of the label-extended graph of $G$. Let
$W$ be some space spanned by eigenvectors of $M$. We denote by $T_W(M)$ the time needed to compute an eigenbasis and the corresponding eigenvalues of $W$. Note that $T_W(M)$ is polynomial in the dimension of $M$, namely $T_W(M)= \text{poly}(n\cdot k)$. \\
\end{definition}

We also note that in the rest of this paper we use the terms ``assignment'' and ``labeling'' interchangeably.

\section{Recovering Solutions by Spectral Methods}
\label{sec:main}
 In this section, we will show how, given a $(1-\epsilon)$ satisfiable instance of Unique Games $\mathcal{U}= (G,M,k)$ on graph $G$ and with alphabet size $k$, the eigenvectors of the label-extended graph $M$ may
be used
to recover good assignments. Specifically, we show the following:
\begin{theorem}(Main)\label{thm:main}
Let $\mathcal{U}=(G,M,k)$ be a $(1-\epsilon)$ satisfiable instance of Unique Games on a $d$-regular graph $G$ on $n$ nodes with alphabet size $k$. Let $M$ be the adjacency matrix of the label-extended graph of $G$ as above. Let
$W$ be the space spanned by eigenvectors of $M$ that have eigenvalue greater than $(1-\gamma)d$, for $\gamma > 8\epsilon$. There is an algorithm that runs in time $2^{O(\frac{\gamma}{\epsilon}\text{dim}(W))}+\text{poly}(n\cdot k)$ and finds an assignment that satisfies at
least $(1-O(\frac{\epsilon}{\gamma-8\epsilon}+\epsilon))$ fraction of the constraints.
\end{theorem}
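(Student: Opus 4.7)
The plan is to exploit the fact that a labeling and its characteristic vector translate between combinatorial satisfaction and spectral properties of $M$. Let $\alpha = \{L_u\}_u$ be a labeling satisfying a $(1-\epsilon)$ fraction of constraints, and let $y$ be its unit-norm characteristic vector with $y(u,i) = 1/\sqrt{n}$ for $i = L_u$. First I would compute $y^T M y$ directly: since the block $M_{uv} = w_{uv}\Pi_{uv}$ contributes $w_{uv}$ to $y^T M y$ exactly when the edge $(u,v)$ is satisfied by $\alpha$, a short calculation using $d$-regularity gives $y^T M y = (1-\epsilon)d$, and hence $y^T(dI - M) y = \epsilon d$.

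Next I would project $y$ onto $W$ and its orthogonal complement, writing $y = y_W + y_{W^\perp}$. Because every eigenvector contributing to $y_{W^\perp}$ has $M$-eigenvalue at most $(1-\gamma)d$,
\[ \epsilon d \;=\; y^T(dI - M)y \;\geq\; y_{W^\perp}^T(dI - M)y_{W^\perp} \;\geq\; \gamma d \cdot \|y_{W^\perp}\|^2, \]
which gives $\|y - y_W\|^2 = \|y_{W^\perp}\|^2 \leq \epsilon/\gamma$. So the unknown assignment vector $y$ lies $\sqrt{\epsilon/\gamma}$-close to the subspace $W$, which we can compute; this is the conceptual crux.

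With $y$ close to $W$, I would construct an $\eta$-net $\mathcal{S}$ for the unit ball of $W$ with $\eta$ of order $\sqrt{\epsilon/\gamma}$. The net has size $|\mathcal{S}| \leq (C/\eta)^{\dim W} \leq 2^{O((\gamma/\epsilon)\dim W)}$, which matches the stated runtime once we add the $T_W(M) = \mathrm{poly}(nk)$ cost of computing an eigenbasis for $W$. Some $v \in \mathcal{S}$ satisfies $\|v - y_W\| \leq \eta$, and since $v - y_W \in W$ while $y - y_W \in W^\perp$, Pythagoras gives $\|v - y\|^2 \leq \eta^2 + \epsilon/\gamma = O(\epsilon/\gamma)$. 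The algorithm sweeps $\mathcal{S}$, applies the argmax rounding $L_x(u) = \arg\max_i x(u,i)$ to each candidate, and returns the labeling that satisfies the most constraints.

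For the rounding analysis, I would show that whenever $L_v(u) \neq L_u$, the block $(v(u,1),\dots,v(u,k))$ differs from $(0,\dots,1/\sqrt{n},\dots,0)$ by at least $1/(2n)$ in squared $\ell_2$ norm, with the minimum attained when both $v(u,L_u)$ and $v(u,L_v(u))$ equal $1/(2\sqrt{n})$. Summing over vertices, the number of ``wrong'' vertices is at most $2n\|v-y\|^2 = O(\epsilon n/\gamma)$, and since each wrong vertex lies on at most $d$ edges, the fraction of edges that flip from satisfied to unsatisfied is $O(\epsilon/\gamma)$. Hence $v$ attains satisfaction fraction at least $1 - \epsilon - O(\epsilon/\gamma)$, and tracking the constants carefully through the choice of $\eta$ and the wrong-vertex count yields the sharper form $1 - O(\epsilon/(\gamma - 8\epsilon) + \epsilon)$, with the hypothesis $\gamma > 8\epsilon$ ensuring the bound is nontrivial. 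The main obstacle I anticipate is precisely this last rounding step: the spectral closeness is a clean Rayleigh-quotient computation and the net is a standard volume argument, but the argmax rounding requires quantifying carefully that a block-level error forces a nontrivial $\ell_2$ penalty, so that closeness in the continuous norm translates into near-equality of the discrete labelings.
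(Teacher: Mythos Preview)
Your proposal is correct and follows essentially the same approach as the paper: a Rayleigh-quotient bound shows the characteristic vector of a good assignment is $O(\sqrt{\epsilon/\gamma})$-close to $W$, an $\eta$-net over the unit ball of $W$ is enumerated, and argmax rounding is analyzed via a blockwise $\ell_2$ argument. The paper frames the first step through an auxiliary ``completion'' of the instance (which you bypass, since the characteristic vector is the same either way) and extracts the $\gamma-8\epsilon$ denominator from the ratio $2\beta^2/\alpha^2$ in Claim~\ref{cl:uniquemax}; your direct projection argument in fact yields the slightly cleaner bound $O(\epsilon/\gamma)$, which trivially implies the stated $O(\epsilon/(\gamma-8\epsilon))$.
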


In particular, the theorem implies that for small enough $\epsilon$, for every $1-\epsilon$ satisfiable instance of Unique Games that satisfies the assumptions of the theorem,
one can find an assignment that satisfies more than 90 percent of the constraints in time that depends on the spectral profile of $M$. We also remark the following:

\begin{remark}
If the algorithm of theorem~\ref{thm:main} fails to find an assignment that satisfies at least $(1-O(\frac{\epsilon}{\gamma-8\epsilon}+\epsilon))$ fraction of the constraints, then for the input $\mathcal{U}$, every assignment satisfies less than $(1-\epsilon)$ fraction of the constraints. Here, the constant in the $O(\cdot)$ notation is the same as the one guaranteed by the theorem above.
\end{remark}

We also consider the special case where the constraints are arbitrary
$\Gamma$-Max-Lin. For such constraints, we prove theorem~\ref{thm:gamma_max_lin1} below.
\begin{theorem}\label{thm:gamma_max_lin1}
Let $\mathcal{U}=(G,M,k)$ be a $(1-\epsilon)$ satisfiable $\Gamma$-Max-Lin instance of Unique Games on a $d$-regular graph $G$ on $n$ nodes with alphabet size $k$. Let
$S_{(1-\gamma)}$ be the space spanned by eigenvectors of $G$ that have eigenvalue greater than $(1-\gamma)d$. Assume moreover that every unit-length vector $\phi \in S_{(1-\gamma)}$ has $\norm{\phi}_{\infty}\leq \frac{C}{\sqrt{n}}$ for some constant $C$.
Then, for $\gamma=\Omega(\sqrt{\epsilon})$, there is an algorithm that runs in time $2^{O(k\cdot D_S)}+\text{poly}(n\cdot k)$ and finds an assignment that satisfies at
least $(1-\zeta)$ fraction of the constraints for some $\zeta = O(\sqrt{\epsilon})$. Here $D_S$ denotes the dimension of $S_{(1-\gamma)}$ and the constant in the $\Omega(\cdot)$ notation depends on $C$.
\end{theorem}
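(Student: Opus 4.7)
The plan is to exploit the abelian-group symmetry of $\Gamma$-Max-Lin to block-diagonalize the $nk\times nk$ matrix $M$ into $k$ blocks living on $\mathbb{R}^n$ (one per character of $\Gamma$), and then run the exhaustive-search scheme of Theorem~\ref{thm:main} on each block separately. Concretely, for each character $\chi_r\colon\Gamma\to\mathbb{C}$ I define the twisted adjacency matrix $G_r(u,v) := w_{uv}\chi_r(-c_{uv})$ on $\mathbb{R}^n$. A short computation using $\pi_{uv}(i) = i-c_{uv}$ shows that whenever $\phi$ is an eigenvector of $G_r$ with eigenvalue $\lambda$, the vector $v_\phi^r(u,i) := \phi(u)\chi_r(i)/\sqrt{k}\in\mathbb{R}^{nk}$ is an eigenvector of $M$ with the same eigenvalue, and the collection $\{v_\phi^r\}$ across all $(r,\phi)$ assembles into an orthonormal eigenbasis of $M$. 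Thus the eigenspace of $M$ above $(1-\gamma)d$ splits as a direct sum of the lifts of the top eigenspaces of the $G_r$, and bounding each such eigenspace to dimension $\leq D_S$ will yield the claimed $2^{O(kD_S)}$ enumeration.

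To identify a search target and control the dimensions, I would pass to the nearby perfectly satisfiable instance $\tilde{\mathcal{U}}$ obtained by correcting the $\epsilon$-fraction of constraints that the good labeling $L$ violates, with corrected offsets $\tilde{c}_{uv}=L_u-L_v$. The key algebraic identity is then $\tilde{G}_r = D_r^{-1} G D_r$, where $D_r := \mathrm{diag}(\chi_r(L_u))_u$ is a unitary diagonal matrix; three things follow: (i) $\tilde{G}_r$ has the same spectrum as $G$, so its top eigenspace $\tilde{S}_r := D_r^{-1} S_{(1-\gamma)}$ has dimension exactly $D_S$; (ii) $\tilde{S}_r$ inherits the hypothesis $\|\cdot\|_\infty\leq C/\sqrt{n}$ because unit-modulus diagonal conjugation preserves pointwise magnitudes; and (iii) the Fourier expansion $y^L = \sum_r\phi_r^L\chi_r/\sqrt{k}$ with $\phi_r^L(u):=\overline{\chi_r(L_u)}/(k\sqrt{n})$ places the target inside $\bigoplus_r\tilde{S}_r$, and one checks directly that each $\phi_r^L$ is a $d$-eigenvector of $\tilde{G}_r$, matching Observation~\ref{obs:evectors}.

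The hard part---and the reason for the hypothesis $\gamma = \Omega(\sqrt{\epsilon})$ rather than $\gamma = \Omega(\epsilon)$---is a spectral-perturbation bridge from the ideal $\tilde{G}_r$ to the algorithmically accessible $G_r$. The perturbation $E_r := G_r - \tilde{G}_r$ is supported only on the altered edges, but its operator norm can be $\Theta(d)$ in the worst case, which is useless for a naive Davis--Kahan application. The $\ell_\infty$ hypothesis is exactly what breaks this: for any unit $\psi\in\tilde{S}_r$, the vertex-wise bound $|(E_r\psi)(u)|\leq 2\|\psi\|_\infty\cdot d_u^{\mathrm{alt}}$ together with $\sum_u(d_u^{\mathrm{alt}})^2\leq d\cdot\sum_u d_u^{\mathrm{alt}} = O(\epsilon nd^2)$ yields $\|E_r\psi\|_2 = O(C\sqrt{\epsilon}\cdot d)$ restricted to the target subspace. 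Combined with the spectral gap $\gamma d = \Omega(\sqrt{\epsilon}\cdot d)$ (with a large enough constant depending on $C$), a Davis--Kahan type argument then shows $\dim(S_r)\leq D_S$ and bounds the principal angle $\sin\theta(S_r,\tilde{S}_r)$ by a small enough quantity, where $S_r$ is the top eigenspace of the actual $G_r$.

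Finally, I would assemble the algorithm in the pattern of Theorem~\ref{thm:main}: compute each $S_r$ in $\mathrm{poly}(nk)$ time via standard eigendecomposition, build an $\epsilon$-net of size $2^{O(kD_S)}$ in the $kD_S$-dimensional subspace $\bigoplus_r\iota_r(S_r)\subseteq\mathbb{R}^{nk}$ (with $\iota_r(\phi):=v_\phi^r$), and run the $\texttt{Recover-Solution}$ argmax rounding from Section~\ref{sec:main} on every net point. Some net point lies within $\ell_2$-distance $O(\sqrt{\epsilon})$ of $y^L$, and the approximate $\ell_\infty$ control inherited from $\tilde{S}_r$ via the perturbation lets that $\ell_2$ bound force coordinate-wise correctness of the argmax at all but an $O(\sqrt{\epsilon})$-fraction of vertices, so the returned labeling agrees with $L$ on that many vertices; a union bound over the $d$ constraints touching each mislabeled vertex produces the promised $(1-O(\sqrt{\epsilon}))$-fraction of satisfied constraints.
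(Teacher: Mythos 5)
Your block-diagonalization of $M$ via characters of $\Gamma$ is a clean reformulation of what the paper does more implicitly (the paper notes that shift-invariance makes $\widetilde{M}$ into $k$ disconnected copies of $G$, giving the eigenbasis $\tilde{\Phi}$ in Definition~\ref{def:eigenbasislift}), and the conjugation identity $\tilde{G}_r = D_r^{-1} G D_r$ is a nice observation. The target-closeness and net-enumeration steps are also in the right spirit. However, the heart of the argument --- the dimension bound $\dim(S_r) \leq D_S$ --- has a gap that the paper spends a real claim to patch, and you have omitted that patch.

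The issue is the \emph{direction} of your Davis--Kahan application. You bound $\|E_r \psi\|$ for $\psi \in \tilde{S}_r$, which is legitimate since $\tilde{S}_r = D_r^{-1}S_{(1-\gamma)}$ inherits the $\ell_\infty$ hypothesis. But the $\sin\theta$ theorem in the form needed here (and as stated in Lemma~\ref{cl:sintheta}) bounds, for a unit eigenvector $w$ of the \emph{perturbed} matrix with eigenvalue $\lambda$, the component $\|P_{\tilde{S}_r^\perp} w\| \leq \|E_r w\| / (\lambda - \lambda_s)$ --- that is, the numerator must be $\|E_r w\|$ for $w \in S_r$, not $\psi \in \tilde{S}_r$. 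The dimension bound $\dim S_r \leq \dim \tilde{S}_r$ requires exactly that every $w \in S_r$ has nontrivial projection onto $\tilde{S}_r$; a bound for $\psi \in \tilde{S}_r$ would instead give $\dim \tilde{S}_r \leq \dim S_r$, the useless direction. And for $w \in S_r$ the $\ell_\infty$ hypothesis does not apply directly, since $S_r$ is not an eigenspace of $G$ (or of $D_r^{-1} G D_r$). This is precisely where the paper inserts a bootstrap argument (Claim~\ref{cl:bound2}): from $w^T M w \geq (1-\theta)d$ and $|M_{uv}| \leq A_{uv}$ one gets $\bar{w}^T A \bar{w} \geq (1-\theta)d$ for the block-norm vector $\bar{w}$, hence $\bar{w}$ decomposes as $a\phi + b\phi_\perp$ with $\phi \in S_{(1-\gamma)}$ and $|b|$ small, and then $\|R\bar{w}\|$ is controlled by applying the $\ell_\infty$ bound only to the $\phi$ piece and the trivial row-sum bound to the small $\phi_\perp$ piece. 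Your block-wise analogue would replace $\bar{w}$ by the entrywise modulus $|w|$ of $w \in S_r \subseteq \mathbb{C}^n$, but you need to state and prove this step.

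Relatedly, you use the single threshold $(1-\gamma)d$ for both $S_r$ and $\tilde{S}_r$ and invoke ``the spectral gap $\gamma d$'' --- but $\gamma d$ is the distance from $d$ to the threshold, not a gap in the spectrum, and nothing guarantees $\tilde{G}_r$ has no eigenvalues just below $(1-\gamma)d$. The paper avoids this by introducing two thresholds $\theta \ll \gamma$: $W$ is the eigenspace of $M$ above $(1-\theta)d$, $Y$ the eigenspace of $\widetilde{M}$ above $(1-\gamma)d$, so that the Davis--Kahan denominator $\lambda - \lambda_s \geq (\gamma-\theta)d$ is strictly positive; the final parameter balancing then forces $\gamma = \Omega(\sqrt{\epsilon})$ through the chain $\theta = \Omega(\epsilon\gamma)$ and $\gamma = \Omega(\theta^{1/3})$. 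You should incorporate this two-level setup. (A minor point: your lifted vectors $v_\phi^r$ are complex-valued in general; either work over $\mathbb{C}$ throughout or pass to the real basis obtained by pairing conjugate characters.)
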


Assuming that $T_{S_{(1-\gamma)}}(M)<2^{O(k\cdot D_S)}$ the algorithm in theorem~\ref{thm:gamma_max_lin1} has running time that solely depends on the spectral profile of $G$, since roughly, the space of eigenvectors of $M$ with large eigenvalue has dimension equal to $k$ times the dimension of the corresponding eigenspace of $G$.\\

The result for expander graphs as it appears in \cite{KT} can be derived as a corollary of theorem~\ref{thm:gamma_max_lin1}.

\begin{corollary}\label{cor:expander}(Unique Games are Easy on Expanders)
Let $\mathcal{U}=(G,M,k)$ be a $(1-\epsilon)$ satisfiable $\Gamma$-Max-Lin instance of Unique Games. Assume moreover that $G$ is a $d$-regular spectral
expander on $n$ nodes. Namely, the second eigenvalue of the adjacency matrix of $G$ is $\lambda \leq (1-\gamma)d$, for $\gamma=\Omega(\sqrt{\epsilon})$.
Then, there is a polynomial time algorithm that finds an assignment that satisfies at least $(1-\zeta)$ fraction of the constraints for some $\zeta =O(\sqrt\epsilon)$.
\end{corollary}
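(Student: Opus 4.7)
The plan is to obtain Corollary~\ref{cor:expander} as a direct specialization of Theorem~\ref{thm:gamma_max_lin1}. Nothing new needs to be proved; the task is to check that the two structural conditions in the hypothesis of that theorem, namely a bound on the dimension $D_S$ of the relevant eigenspace and an $\ell_\infty$ bound on its unit vectors, are automatically satisfied when $G$ is a spectral expander.

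First I would identify the subspace $S_{(1-\gamma)}$. Because $G$ is $d$-regular, the largest eigenvalue of its adjacency matrix is $d$, attained by the normalized all-ones vector $\mathbf{1}/\sqrt{n}$. The spectral expander assumption asserts that every other eigenvalue is at most $\lambda \leq (1-\gamma)d$, so no eigenvector other than $\mathbf{1}/\sqrt{n}$ has eigenvalue strictly greater than $(1-\gamma)d$. Consequently $S_{(1-\gamma)}$ is one-dimensional, spanned by $\mathbf{1}/\sqrt{n}$, and therefore $D_S = 1$.

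Next I would verify the $\ell_\infty$ hypothesis of Theorem~\ref{thm:gamma_max_lin1}. The only unit vectors in this one-dimensional space are $\pm \mathbf{1}/\sqrt{n}$, each of which has $\ell_\infty$ norm exactly $1/\sqrt{n}$. Hence the assumption $\|\phi\|_{\infty} \leq C/\sqrt{n}$ holds for every unit $\phi \in S_{(1-\gamma)}$ with the absolute constant $C = 1$. Since the hypothesis $\gamma = \Omega(\sqrt{\epsilon})$ is exactly the one inherited from the corollary's statement, every premise of Theorem~\ref{thm:gamma_max_lin1} is in force.

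Applying Theorem~\ref{thm:gamma_max_lin1} with $D_S = 1$ then produces an assignment that satisfies a $(1-\zeta)$ fraction of the constraints for some $\zeta = O(\sqrt{\epsilon})$ in running time $2^{O(k \cdot D_S)} + \text{poly}(n \cdot k) = 2^{O(k)} + \text{poly}(n \cdot k)$, which is polynomial in the input size in the {\sf UGC} regime of interest (for instance whenever $k = O(\log n)$, the relevant range for standard reductions). There is essentially no obstacle in the derivation: the whole force of the spectral expander assumption is precisely that it collapses the top eigenspace of $G$ to the flat vector, which is exactly the structural input Theorem~\ref{thm:gamma_max_lin1} is designed to consume.
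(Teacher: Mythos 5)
Your proof is correct and follows essentially the same route as the paper: specialize Theorem~\ref{thm:gamma_max_lin1} by observing that the spectral expander hypothesis makes $S_{(1-\gamma)}$ the one-dimensional span of the normalized all-ones vector (so $D_S=1$ and $C=1$), then note that $2^{O(k)}+\text{poly}(n\cdot k)$ is polynomial in the regime $k=O(\log n)$ relevant to {\sf UGC}.
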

\begin{proof}
The eigenspace $S_{(1-\gamma)}$ of theorem~\ref{thm:gamma_max_lin1} consists solely of the all 1's vector. The $\ell_{\infty}$ norm assumption is trivially satisfied with $C=1$. Then, the conclusion of the theorem implies that for $\gamma=\Omega(\sqrt{\epsilon})$,
there is an algorithm that runs in time $2^{O(k\cdot 1)}+T_{S_{(1-\gamma)}}(M)=2^{O(k)}+\text{poly}(n\cdot k)$ and finds an assignment that satisfies at least an $1-O(\sqrt{\epsilon})$ fraction of the constraints. By the statement of the Unique Games Conjecture, it is enough to consider $k$ to be at most logarithmic in $n$. Therefore the algorithm runs in $2^{O(\log n)}+\text{poly}(n\cdot k)=\text{poly}(n)$ time.
\end{proof}

We remark that the $\Gamma$-Max-Lin requirement is necessary for corollary~\ref{cor:expander}. The proof fails to produce similar guarantee for the general case. This is due to the fact that the spectral properties of label-extended graphs that correspond to arbitrary Unique Games are poorly understood.\\

\subsection{Proof of Theorem~\ref{thm:main}}\label{sec:31}

We prove theorem~\ref{thm:main} by combining two facts. First we show that characteristic vectors of good labelings have large projection onto the eigenspace of $M$ spanned by eigenvectors with large eigenvalues. Then we show that if $M$ has at most $D$ large eigenvalues then we can find a vector close (in $\ell_2$-norm) to such a characteristic vector of a labeling in time exponential in $D$, by looking at an appropriate epsilon-net for this eigenspace. Our proof is inspired by the
approach that the authors in the unpublished manuscript \cite{KT} used in order to recover satisfying assignments on expanders.

\textbf{Proof Overview.}
Assume we are given a $(1-\epsilon)$ satisfiable instance $\mathcal{U} =(G,M,k)$ that satisfies the assumptions of theorem~\ref{thm:main}.
We define a completely satisfiable game $\mathcal{\widetilde{U}}=(G,\widetilde{M},k)$ (with value 1) that ``corresponds'' to $\mathcal{U}$ as follows:
\begin{definition}(Completion of a Game)
Let $\mathcal{U} =(G,M,k)$ be as above and let $\mathcal{L}=\{L(u)\}$ be an assignment that satisfies $(1-\epsilon)$ fraction of the constraints.
 Let $e=(u,v)$ be an edge with constraint $x_v=\pi_{uv}(x_u)$ that is not satisfied by this assignment, with labels $L(u)$, $L(v)$  on its endpoints.
Construct a new game by changing the constraint on every such edge $e=(u,v)$ by replacing $\pi_{uv}$ with a permutation $\tilde{\pi}_{uv}$ for which $\tilde{\pi}_{uv}(L(u))=L(v)$.
Let $\mathcal{\widetilde{U}}=(G,\widetilde{M},k)$ be this new completely satisfiable game. We say that $\mathcal{\widetilde{U}}$ is a
``completion'' of $\mathcal{U}$.
\end{definition}
For the sake of the proof, we will assume that an \textit{almost} satisfiable instance $\mathcal{U}=(G,M,k)$ is constructed as follows:
\begin{itemize}
\item Let $\mathcal{\widetilde{U}}=(G,\widetilde{M},k)$ be a completion of $\mathcal{U}$.
\item Let an adversary pick the $\epsilon$ fraction of edges that were unsatisfied in $\mathcal{U}$ and change their constraints back to the original ones. We can now think of $M$ as a ``perturbation'' of $\widetilde{M}$ and $\mathcal{U}$ as a ``perturbed'' game of $\mathcal{\widetilde{U}}$.
\end{itemize}

Let $W$ be the span of the eigenvectors of $M$ with
eigenvalue at least $(1-\gamma)d$, for some $\gamma > 8\epsilon$. The algorithm simply looks at a set
$\mathcal{S} \subseteq W$ of appropriately many candidate vectors and ``reads-off"
an assignment. We describe later on how to choose the set $\mathcal{S}$.\\

{\large $\texttt{Recover-Solution}_\mathcal{S}(\mathcal{U})$}
\begin{itemize}
\item For each $x \in \mathcal{S}$, construct an assignment $L_x$ by assigning
to each vertex $u$, the index corresponding to the largest entry in the
block $(x_{u1}, \ldots, x_{uk})$ i.e. $L_x(u) = \arg \max_i{x_{ui}}$.
\item Out of all assignments $L_x$ for $x \in \mathcal{S}$, choose the one satisfying
the maximum number of constraints.
\end{itemize}
To choose $\mathcal{S}$, we look at the highest eigenvectors for the matrix $\widetilde{M}$. Those are the \textit{assignment} eigenvectors, namely the characteristic vectors of the (perfectly) satisfying assignments of $\mathcal{\widetilde{U}}$
(since they all have eigenvalue equal to $d$). We will first observe that every such eigenvector $y$ is close to some vector in $W$, and the length of the projection of $y$ onto $W$ depends on $\epsilon, \gamma$. We then identify
a set of \textit{nice} vectors $\mathcal{N}\subseteq W$ such that the above algorithm works for any vector $v$
close to some vector in $\mathcal{N}$. These are going to be the vectors that are close to the assignment eigenvectors.
 We then construct a set $\mathcal{S} \subseteq W$ of test vectors
such that at least one vector  $v\in \mathcal{S}$ is close to a vector in $\mathcal{N}$. $\mathcal{S}$ is simply an epsilon-net for $W$. Lastly, we go over all vectors in $\mathcal{S}$ until we find $v$. The running time of the algorithm will depend on the size of $\mathcal{S}$ which, for an appropriately defined epsilon-net, it is exponential in the dimension of $W$.

\subsubsection{Eigenspaces and Labelings}

Our first step is to identify an appropriate eigenspace $W$ of $M$ such that characteristic vectors of good labelings have large projection onto $W$.

For the matrix $\widetilde{M}$, let $\mathcal{L}=\{L(u)\}_{u\in V(G)}$ be a completely satisfying assignment. Define vector $y^{(\mathcal{L})}$ as
\begin{equation*}
{y_{ui}}^{(\mathcal{L})} = \left\{
\begin{array}{ll}
\frac{1}{\sqrt{n}}  & \text{if}~~~ i = L(u) \\
0 & \text{otherwise}
\end{array}\right.
\end{equation*}

It is easy to see that $y^{(\mathcal{L})}$ is an eigenvector of $\widetilde{M}$ with eigenvalue $d$. Since $y^{(\mathcal{L})}$ corresponds to the satisfying assignment $\mathcal{L}$, it can be seen as the characteristic vector of the assignment. We refer to such vectors as the ``assignment'' eigenvectors.
 Our next goal is to show that, for some appropriate choice of $\gamma$, the eigenspace $W$ contains vectors close to an assignment vector.
\begin{claim}(Closeness)\label{cl:closeness}
For every completely satisfying assignment $\mathcal{L}$ there is a unit vector $v_{\mathcal{L}} \in W$ such that $v_{\mathcal{L}}=\alpha y^{(\mathcal{L})}+\beta {y^{(\mathcal{L})}}_\perp$, with $\alpha >0$ and $|\beta|\leq \sqrt{\frac{2\epsilon}{\gamma}}$. Here both ${y^{(\mathcal{L})}}$ and ${y^{(\mathcal{L})}}_\perp$ are unit vectors and ${y^{(\mathcal{L})}}_\perp \perp y^{(\mathcal{L})}$. By taking, for example, $\gamma\geq 200 \epsilon$, we have $|\beta|\leq \frac{1}{10}$.
\end{claim}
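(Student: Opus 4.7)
The plan is to show that the Rayleigh quotient $\langle M y^{(\mathcal{L})}, y^{(\mathcal{L})}\rangle$ is large, and then conclude, via a standard spectral argument, that $y^{(\mathcal{L})}$ must have most of its $\ell_2$ mass on eigenvectors of $M$ with large eigenvalue. Throughout, I write $y := y^{(\mathcal{L})}$ for the unit characteristic vector (entries $1/\sqrt{n}$ at coordinates $(u,L(u))$ and $0$ elsewhere).

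First, I would compute $\langle M y, y\rangle$ directly from the block structure of $M$. Since $M_{uv} = w_{uv}\Pi_{uv}$ and $y_{u,i}y_{v,j} = 1/n$ exactly when $i = L(u)$ and $j = L(v)$ (and is $0$ otherwise), the quadratic form collapses to
\[
\langle My, y\rangle \;=\; \frac{1}{n}\sum_{(u,v)\in E(\text{directed})} w_{uv}\,\mathbf{1}\!\left[\pi_{uv}(L(u)) = L(v)\right].
\]
Because the graph is $d$-regular and $\mathcal{L}$ satisfies a $(1-\epsilon)$ fraction of the constraints of $\mathcal{U}$ (and each satisfied undirected edge $(u,v)$ contributes twice to the directed sum, using $\pi_{uv}=\pi_{vu}^{-1}$), this evaluates to exactly $(1-\epsilon)d$. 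Thus $\langle My,y\rangle \ge (1-\epsilon)d$.

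Next, I would use the spectral decomposition of $M$ to decompose $y = p + q$, where $p$ is the orthogonal projection of $y$ onto $W$ and $q \in W^{\perp}$. Since $M$ is symmetric, both $p$ and $q$ lie in $M$-invariant subspaces; the largest eigenvalue of $M$ is bounded by $d$ (the graph being $d$-regular bounds the operator norm), while by definition every eigenvalue corresponding to $W^{\perp}$ is at most $(1-\gamma)d$. Therefore
\[
(1-\epsilon)d \;\le\; \langle My, y\rangle \;=\; \langle Mp,p\rangle + \langle Mq,q\rangle \;\le\; d\|p\|^2 + (1-\gamma)d\|q\|^2.
\]
Combined with $\|p\|^2 + \|q\|^2 = 1$, this rearranges to $\|q\|^2 \le \epsilon/\gamma$.

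Finally, define $v_{\mathcal{L}} := p/\|p\|$, which is a unit vector in $W$ (and $p \neq 0$ since $\|p\|^2 \ge 1 - \epsilon/\gamma > 0$ for $\gamma > 8\epsilon$). Writing $v_{\mathcal{L}} = \alpha y + \beta y^{(\mathcal{L})}_{\perp}$ with $y^{(\mathcal{L})}_{\perp}$ a unit vector orthogonal to $y$, we have $\alpha = \langle v_{\mathcal{L}}, y\rangle = \|p\| > 0$ and $\beta^2 = 1 - \alpha^2 = \|q\|^2 \le \epsilon/\gamma \le 2\epsilon/\gamma$, so $|\beta| \le \sqrt{2\epsilon/\gamma}$ as claimed. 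There is no genuine obstacle here; the only point demanding care is the combinatorial evaluation of $\langle My,y\rangle$, specifically correctly accounting for the undirected-edge convention and for the fact that $(1-\epsilon)$-satisfiability of $\mathcal{U}$ translates into a $(1-\epsilon)d$ contribution (not, e.g., $(1-\epsilon)d/2$) after normalization.
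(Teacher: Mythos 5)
Your proof is correct and follows essentially the same approach as the paper's: bound the Rayleigh quotient $\langle My^{(\mathcal{L})}, y^{(\mathcal{L})}\rangle$ from below, decompose $y^{(\mathcal{L})}$ into its $W$- and $W^\perp$-components, and use the eigenvalue gap to bound the $W^\perp$-mass. The only (harmless) difference is that your careful edge-by-edge accounting gives the tighter lower bound $(1-\epsilon)d$, whereas the paper uses the looser $(1-2\epsilon)d$; both yield $|\beta|\leq\sqrt{2\epsilon/\gamma}$.
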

\begin{proof}
We can easily see that $(y^{(\mathcal{L})})^TMy^{(\mathcal{L})} \geq d(1-2\epsilon)$.
We can now write $y^{(\mathcal{L})} =a v_{\mathcal{L}}+b (v_{\mathcal{L}})_\perp$, with $a>0$, ${v_{\mathcal{L}}}\in W$ and $({v_{\mathcal{L}}})_\perp \in W^\perp$. We calculate:
$$(1-2\epsilon)d \leq (y^{(\mathcal{L})})^TMy^{(\mathcal{L})} = a^2({v_{\mathcal{L}}})^T M{v_{\mathcal{L}}} + b^2((v_{\mathcal{L}})_\perp)^TM(v_{\mathcal{L}})_\perp \leq a^2 d +b^2(1-\gamma)d$$ from which we get that
$|b| \leq \sqrt{\frac{2\epsilon}{\gamma}}$.

Now, we can in turn express $v_{\mathcal{L}} = \alpha y^{(\mathcal{L})} +\beta {y^{(\mathcal{L})}}_\perp$, where $\alpha = \langle v_{\mathcal{L}}, y^{(\mathcal{L})}\rangle = a = \sqrt{1-b^2} \geq \sqrt{1-\frac{2\epsilon}{\gamma}}$ or, equivalently, $|\beta| =\sqrt{1-\alpha^2} \leq \sqrt{\frac{2\epsilon}{\gamma}}$.
\end{proof}

We have now managed to show that there exists a set of ``nice'' vectors $\mathcal{N} =\{v_{\mathcal{L}}\}\subseteq W$. The next claim shows that, if we knew the $v_{\mathcal{L}}$'s then we could set $\mathcal{S}= \mathcal{N}$ and the algorithm $\texttt{Recover-Solution}_{\mathcal{N}}(\mathcal{U})$ would return a highly satisfying assignment for $\mathcal{U}$.

\begin{claim}\label{cl:uniquemax}
If $x$ is a vector such that $x = \alpha y^{(\mathcal{L})} + \beta {y^{(\mathcal{L})}}_{\perp}$ for
some $y^{(\mathcal{L})}$ with $\alpha > 0$ and ${y^{(\mathcal{L})}}_{\perp} \perp y^{(\mathcal{L})}$, then the coordinate ${x_{u L(u)}}$ is maximum in absolute value
in at least  $(1-\frac{2\beta^2}{\alpha^2})n$ blocks.
\end{claim}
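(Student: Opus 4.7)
The plan is to decompose the vector $x$ block-by-block using the explicit form of $y^{(\mathcal{L})}$, and then argue that every ``bad'' block (i.e.\ one where $|x_{u,L(u)}|$ fails to be the maximum coordinate in absolute value) must account for a definite amount of the squared error $\|x - \alpha y^{(\mathcal{L})}\|^2 = \beta^2$. Summing these per-block contributions then yields the stated bound.

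To set up notation, write $y^{(\mathcal{L})}_\perp$ as an arbitrary unit vector orthogonal to $y^{(\mathcal{L})}$ and let $p_u = y^{(\mathcal{L})}_\perp(u,L(u))$ and $q_{uj} = y^{(\mathcal{L})}_\perp(u,j)$ for $j\neq L(u)$. Then
$$
x_{u,L(u)} \;=\; \tfrac{\alpha}{\sqrt{n}} + \beta p_u, \qquad x_{uj} \;=\; \beta q_{uj} \quad (j \neq L(u)).
$$
Since $y^{(\mathcal{L})}_\perp$ is unit, the identity $\|x-\alpha y^{(\mathcal{L})}\|^2 = \beta^2$ becomes the coordinate-wise statement $\sum_u \bigl[(x_{u,L(u)} - \tfrac{\alpha}{\sqrt{n}})^2 + \sum_{j\neq L(u)} x_{uj}^2\bigr] = \beta^2$.

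Call a block $u$ \emph{bad} if there exists $j\neq L(u)$ with $|x_{uj}| \ge |x_{u,L(u)}|$. The core step is to show that for any bad block, the block's contribution to the left-hand side is at least $\tfrac{\alpha^2}{2n}$. Fixing such a block, let $a = x_{u,L(u)}$, let $b = x_{uj^\star}$ for a witness $j^\star$, and let $c = \tfrac{\alpha}{\sqrt{n}}$. Then $|b|\ge |a|$, and I would minimize $(a-c)^2 + b^2$ subject to this constraint. The unconstrained minimum in $b$ is $b=0$, which is only feasible when $a=0$, giving value $c^2$; otherwise the minimum lies on the boundary $b^2 = a^2$, where one minimizes $2a^2 - 2ac + c^2$, obtaining value $c^2/2$ at $a = c/2$. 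Hence every bad block contributes at least $c^2/2 = \tfrac{\alpha^2}{2n}$ to the block-wise decomposition of $\beta^2$.

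Summing over the set $B$ of bad blocks yields $|B|\cdot \tfrac{\alpha^2}{2n} \le \beta^2$, so $|B|\le \tfrac{2\beta^2}{\alpha^2}\, n$ and the remaining $(1-\tfrac{2\beta^2}{\alpha^2})n$ blocks are good, which is exactly the claim. The only real obstacle is the small optimization in the previous paragraph; it is tempting to use the weaker case split ``either $|a|\le c/2$ or some $|x_{uj}|>c/2$'' to get only $c^2/4$ per bad block, losing a factor of $2$, so one has to exploit both terms $(a-c)^2$ and $b^2$ simultaneously via the boundary analysis to recover the sharp constant.
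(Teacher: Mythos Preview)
Your proof is correct and follows essentially the same approach as the paper: both show that each bad block contributes at least $\alpha^2/(2n)$ to $\|x-\alpha y^{(\mathcal{L})}\|^2=\beta^2$ and then sum over bad blocks. Your constrained minimization of $(a-c)^2+b^2$ subject to $|b|\ge|a|$ is exactly the paper's inequality $(q-p)^2\le 2(q^2+p^2)$ applied to the coordinates of $y^{(\mathcal{L})}_\perp$, just rewritten in the $x$-coordinates; if anything, your version handles the absolute-value comparison in the claim slightly more carefully than the paper's signed inequality does.
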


\begin{proof}
Within each block $u$, in
order for coordinate $L(u)$ to be no longer the maximum one, it must
happen that for some $j$
\begin{equation*}
\alpha \frac{1}{\sqrt{n}}+\beta ({y^{(\mathcal{L})}}_{\perp})_{uL(u)}\leq  \beta ({y^{(\mathcal{L})}}_{\perp})_{uj}
\end{equation*}
However, this gives
\begin{eqnarray*}
[({y^{(\mathcal{L})}}_{\perp})_{uj}-({y^{(\mathcal{L})}}_{\perp})_{uL(u)}]^2 &\geq& \frac{\alpha^2}{\beta^2n}\Rightarrow \\
2({y^{(\mathcal{L})}}_{\perp})_{uj}^2+2({y^{(\mathcal{L})}}_{\perp})_{uL(u)}^2 &\geq&\frac{\alpha^2}{\beta^2n}\Rightarrow\\
\norm{({y^{(\mathcal{L})}}_{\perp})_u}^2 \geq ({y^{(\mathcal{L})}}_{\perp})_{uj}^2 +({y^{(\mathcal{L})}}_{\perp})_{uL(u)}^2 &\geq& \frac{\alpha^2}{2\beta^2 n}
\end{eqnarray*}
Since $\norm{{y^{(\mathcal{L})}}_{\perp}} = 1$, this can only happen for at most $\frac{2\beta^2}{\alpha^2}n$
blocks.
\end{proof}

\subsubsection{Eigenspace Enumeration: Finding the Set $\mathcal{S}$ of Test Vectors}\label{subsec:setS}

 Our next step is to show that the search for a good assignment can be done in time exponential to the number of large eigenvalues of $M$, namely the dimension of $W$.

 Since we don't know the $y^{(\mathcal{L})}$ (if we did, we would be done), and the space $W$ contains infinitely many unit vectors, we cannot identify
$\mathcal{N}$. To get around this, we {\em discretize} $W$ with an appropriate epsilon-net.
 If we let $w^{(0)}, \ldots, w^{(\text{dim}(W)-1)}$ be an eigenbasis for $W$.
We define the set $\mathcal{S}$ as
\begin{equation*}
\mathcal{S} ~~=~~ \left\{ v = \sum_{s=0}^{(\text{dim}(W)-1)} \alpha_s w^{(s)} ~~\vert~~
\alpha_s \in \sqrt{\frac{2\epsilon}{\gamma\text{dim}(W)}}\bz, ~ \norm{v} \leq 1\right\}
\end{equation*}
It can be calculated (see, for instance \cite{FO05}) that the number of points in the set $\mathcal{S}$ is at most $2^{O(\frac{\gamma}{\epsilon}\text{dim}(W))}$.

To conclude the proof of theorem~\ref{thm:main}, it remains to show that $\mathcal{S}$ has a vector close to a nice vector $v_{\mathcal{L}} \in \mathcal{N}$. By construction, $\mathcal{S}$ contains at least one vector close to every vector in $\mathcal{N}$ and thus it also contains at least one vector $v$ such that $v = \alpha v_{\mathcal{L}} + \beta ({v_{\mathcal{L}}})_\perp$
for some $\mathcal{L}$ and $|\beta| \leq \sqrt{\frac{2\epsilon}{\gamma}}$. Together with claim~\ref{cl:closeness}, this implies that we can also write  $v = a y^{(\mathcal{L})} +by^{(\mathcal{L})}_\perp$, with $|b|\leq \sqrt{\frac{2\epsilon}{\gamma}}+\sqrt{\frac{2\epsilon}{\gamma}}=2\sqrt{\frac{2\epsilon}{\gamma}}$. Thus, by claim~\ref{cl:uniquemax}, for this vector $v$, $\texttt{Recover-Solution}_{\mathcal{S}}(\mathcal{U})$
 recovers an assignment which agrees with $y^{(\mathcal{L})}$ in $(1-O(\frac{\epsilon}{\gamma-8\epsilon}))$
fraction of the blocks.\\ We will consider that a constrain is violated if one of the two following events happen: either the assignment we recovered does not agree with the initial perfectly satisfying assignment or the constraint is one of the (at most) $\epsilon$ fraction of constraints that were changed. It follows that the assignment which we recovered, violates constraints on edges that have total weight at most $O(\frac{\epsilon}{\gamma-8\epsilon})nd + \epsilon nd$.
Since the total weight of constraints is $nd/2$, theorem~\ref{thm:main} follows. The algorithm runs in $2^{O(\frac{\gamma}{\epsilon}\text{dim}(W))}+T_W(M)=2^{O(\frac{\gamma}{\epsilon}\text{dim}(W))}+\text{poly}(n\cdot k)$ time.

\subsection{Proof of Theorem~\ref{thm:gamma_max_lin1}}
\label{subsec:gamma_max_lin1}

 Let $\theta>0$ such that $\gamma \geq \theta\geq \Omega(\epsilon \gamma)$. Let $W$ be the span of eigenvectors of $M$ with eigenvalue at least $(1-\theta)d$. Let
$Y$ be the span of eigenvectors of $\widetilde{M}$ with eigenvalue at least
$(1-\gamma)d$. In particular, $Y$ contains all the \textit{assignment} eigenvectors, namely the characteristic vectors of the (perfectly) satisfying assignments of $\mathcal{\widetilde{U}}$
(since they all have eigenvalue equal to $d$). Let $Y_\perp$ be the orthogonal complement of $Y$.\\
 For the proof, we will use theorem~\ref{thm:main} together with a bound on the dimension of W. Roughly we show that the dimension of $W$ is at most as large as the dimension of $Y$ which, in turn, is at most $k$ times the dimension of $S_{(1-\gamma)}$. To conclude, we apply theorem~\ref{thm:main} for the eigenspace $W$.\\

 We will assume w.l.o.g that the graphs we are dealing with are connected. Otherwise, we apply the results to each connected component individually.

\begin{definition}\label{def:eigenbasislift}
Let $\Phi$  be an eigenbasis for $G$. For every eigenvector $\phi= (\phi_u)_{u\in V} \in \Phi$ let $\widetilde{\phi}$ to be the $k n$-dimensional
vector $\widetilde{\phi}_{ui}= \phi_u$, for $i=1,\cdots, k$. Due to shift invariance, there exist $k$ satisfying labelings of the unique games instance which we denote by $L_i, i=1,\cdots,k$. Let $y^{(L_i)}$ be the characteristic vector of labeling $L_i$ and let $\mathcal{E}(\phi) =\{\widetilde{\phi}\cdot y^{(L_i)}|\quad \text{for} \quad i=0,\cdots,k-1\}$. Here $\cdot $ denotes entry-wise vector product. We define the following set of $k\times n$ vectors: $\tilde{\Phi}=\bigcup_{\phi\in \Phi}\mathcal{E}(\phi)$.
\end{definition}

We note that the set $\tilde{\Phi}$ consists of orthogonal unit vectors, since every two vectors in $\tilde{\Phi}$ have disjoint support. The next observation immediately follows from the fact that, due to the shift-invariance property of $\Gamma$-Max-Lin instances, $\widetilde{M}$ is the adjacency matrix of $k$ disconnected copies of $G$.

\begin{observation}\label{cl:dimensionD_S}
 $\tilde{\Phi}$ is an eigenbasis of $\widetilde{M}$. Consequently, the dimension of $Y$ is at most $k\cdot D_S$, where $D_S$ is the dimension of $S_{(1-\gamma)}$, as per the assumption of the theorem.
\end{observation}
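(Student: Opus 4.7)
The plan is to verify in sequence: (i) every element of $\tilde{\Phi}$ is an eigenvector of $\widetilde{M}$ with an eigenvalue inherited from $A_G$; (ii) $\tilde{\Phi}$ is an orthonormal set of size $nk$, hence an eigenbasis of $\widetilde{M}$; and (iii) the dimension bound for $Y$ then follows by identifying which elements of $\tilde{\Phi}$ land above the threshold $(1-\gamma)d$.

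For (i), fix $\phi \in \Phi$ with $A_G\phi = \lambda\phi$ and fix any satisfying labeling $L_i$ of $\widetilde{\mathcal{U}}$. Let $\psi := \widetilde{\phi}\cdot y^{(L_i)}$, so that $\psi$ is supported exactly on $\{(u,L_i(u)) : u\in V\}$ with value $\phi_u$ there. For any coordinate $(v,j)$,
\[
(\widetilde{M}\psi)_{(v,j)} \;=\; \sum_{u} w_{uv}\,\mathbf{1}[\widetilde{\pi}_{uv}(L_i(u))=j]\,\phi_u.
\]
Because $L_i$ perfectly satisfies $\widetilde{\mathcal{U}}$, $\widetilde{\pi}_{uv}(L_i(u)) = L_i(v)$, so the only nonzero contributions occur for $j = L_i(v)$, yielding $(\widetilde{M}\psi)_{(v,L_i(v))} = \sum_u w_{uv}\phi_u = \lambda\phi_v$. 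Since also $\psi_{(v,L_i(v))} = \phi_v$, we conclude $\widetilde{M}\psi = \lambda\psi$.

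For (ii), observe that by shift-invariance in the $\Gamma$-Max-Lin setting the $k$ satisfying labelings $L_0,\ldots,L_{k-1}$ are obtained from one another by adding a fixed nonzero group element to every label; in particular any two distinct ones differ at every vertex. Hence for $L_i\neq L_j$ the supports of $\widetilde{\phi}\cdot y^{(L_i)}$ and $\widetilde{\phi'}\cdot y^{(L_j)}$ are disjoint, which gives orthogonality. For the same $L_i$ but distinct $\phi,\phi'\in\Phi$, a direct calculation gives $\langle \widetilde{\phi}\cdot y^{(L_i)},\, \widetilde{\phi'}\cdot y^{(L_i)}\rangle = \sum_u \phi_u\phi'_u = \langle\phi,\phi'\rangle = 0$ by orthonormality of $\Phi$; the same calculation shows each vector in $\tilde{\Phi}$ has unit length. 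Thus $\tilde{\Phi}$ is an orthonormal collection of size $nk$, matching $\dim(\widetilde{M})$, so it is an eigenbasis.

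For (iii), expanding any eigenvector of $\widetilde{M}$ in the basis $\tilde{\Phi}$ and invoking (i), every eigenvalue of $\widetilde{M}$ is an eigenvalue of $A_G$, and the eigenspace of $\widetilde{M}$ of eigenvalue $>(1-\gamma)d$ is spanned by those vectors $\widetilde{\phi}\cdot y^{(L_i)}$ with $\phi\in S_{(1-\gamma)}$. There are $D_S$ such $\phi$'s and $k$ choices of $L_i$, so $\dim(Y)\le k\cdot D_S$. The main thing to be careful about is the bookkeeping in step (i); once the perfectly-satisfying property of the $L_i$ is used to kill off all but one term per edge, the rest collapses cleanly.
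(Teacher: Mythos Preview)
Your proof is correct and more carefully written than the paper's own justification. The paper dispatches the observation in one line by noting that, thanks to shift-invariance, the label-extended graph of the completed $\Gamma$-Max-Lin instance is simply $k$ disjoint copies of $G$ (the $i$th copy being the induced subgraph on $\{(u,L_i(u)):u\in V\}$); the eigenbasis and eigenvalues of such a disjoint union are then immediate. Your argument amounts to verifying this block-diagonal structure coordinate by coordinate rather than naming it, so the underlying idea is the same.

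One small point where your write-up is actually cleaner than the paper: the paper asserts that any two vectors in $\tilde{\Phi}$ have disjoint support, which is not literally true when the two vectors share the same labeling $L_i$ but come from different $\phi,\phi'\in\Phi$. You correctly split into the two cases (different $L_i$: disjoint support; same $L_i$: orthogonality inherited from $\langle\phi,\phi'\rangle=0$), which is the right thing to do.
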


\begin{lemma}\label{cl:betabound}
Assume $\gamma\geq \theta\geq \Omega(\epsilon \gamma)$ for some appropriately large constant. Then any unit-length vector $w\in W$ can be expressed as $\alpha y + \beta y_{\perp}$ where $y \in Y$ and $y_\perp \in Y_\perp$ are both unit length vectors, and
$|\beta| \leq O(\sqrt{\frac{\theta}{\gamma^3}})$. By taking, for example, $\gamma \geq \Omega(\sqrt[3]{\theta})$ and, consequently, $\gamma \geq \Omega(\sqrt{\epsilon})$, we get $|\beta| \leq \frac{1}{8}$.
\end{lemma}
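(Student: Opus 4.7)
The plan is to set up a Rayleigh-quotient comparison between $M$ and $\widetilde{M}$ on $w$, and reduce the problem to bounding a single quadratic form of $w$ against $E := \widetilde{M} - M$. First, I write $w = \alpha y + \beta y_\perp$, where $y \in Y$ and $y_\perp \in Y^\perp$ are the unit vectors in the directions of $P_Y w$ and $(I-P_Y)w$ respectively, and $\alpha,\beta \geq 0$ satisfy $\alpha^2+\beta^2=1$. Because $W$ is spanned by eigenvectors of $M$ with eigenvalue at least $(1-\theta)d$, we have $w^T M w \geq (1-\theta)d$. Since both $Y$ and $Y^\perp$ are invariant under $\widetilde{M}$, the cross term in the block decomposition vanishes and
\[
w^T \widetilde{M} w = \alpha^2\, y^T \widetilde{M} y + \beta^2\, y_\perp^T \widetilde{M} y_\perp \leq \alpha^2 d + \beta^2 (1-\gamma)d = d - \gamma d \beta^2.
\]
Substituting $M = \widetilde{M} - E$ and combining the two inequalities yields the master inequality
\[
\gamma d \beta^2 \leq \theta d - w^T E w,
\]
so the entire problem reduces to an upper bound on $|w^T E w|$ of order $O(\theta d/\gamma^2)$.

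To bound $|w^T E w|$, I would exploit two structural features. First, $E$ is supported on the $\epsilon$-fraction of edges where $M$ and $\widetilde{M}$ disagree, so $\sum_{u,v}|E_{uv}| \leq O(\epsilon n d)$. Second, the hypothesis $\|\phi\|_\infty \leq C/\sqrt{n}$ for every unit $\phi \in S_{(1-\gamma)}$ propagates, via the decomposition $Y = \bigoplus_i Y_i$ of Observation~\ref{cl:dimensionD_S} (whose $k$ summands have pairwise disjoint support), to an $\ell_\infty$ bound of the form $\|y\|_\infty \leq O(C/\sqrt{n})$ on every unit vector $y \in Y$, up to a $D_S$-dependent constant. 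Expanding
\[
w^T E w = \alpha^2 (y^T E y) + 2\alpha\beta\, (y^T E y_\perp) + \beta^2 (y_\perp^T E y_\perp),
\]
I would bound $|y^T E y| \leq \|y\|_\infty^2 \sum|E_{uv}| = O(\epsilon d)$ directly, bound the cross term by Cauchy--Schwarz as $|y^T E y_\perp| \leq \|Ey\|\cdot\beta$ with $\|Ey\|$ in turn controlled by an analogous bad-degree/$\ell_\infty$ computation, and rewrite the last term as $y_\perp^T E y_\perp = y_\perp^T \widetilde{M} y_\perp - y_\perp^T M y_\perp$ so as not to be forced to use the crude operator-norm bound on $E$.

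The main obstacle is precisely this last term: $y_\perp$ inherits no $\ell_\infty$ control from the hypothesis, and the naive estimate $|y_\perp^T E y_\perp| \leq \|E\|_{\mathrm{op}} \beta^2 = O(d\beta^2)$ destroys the spectral gap. The remedy is to use $y_\perp \in Y^\perp$ to bound $y_\perp^T \widetilde{M} y_\perp \leq (1-\gamma)d$ and to reabsorb the remaining $y_\perp^T M y_\perp$ contribution back into $w^T M w \geq (1-\theta)d$ using the decomposition $w = \alpha y + \beta y_\perp$ once more. After collecting all three contributions and applying Cauchy--Schwarz to the cross term, the master inequality turns into a quadratic in $\beta^2$ of roughly the shape $\gamma\beta^2 \leq \theta + O(\epsilon)\cdot(1+\beta/\gamma)$, whose solution under the standing hypotheses $\theta \geq \Omega(\epsilon\gamma)$ and $\gamma \geq \Omega(\sqrt[3]{\theta})$ is $\beta^2 \leq O(\theta/\gamma^3)$. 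Taking $\gamma \geq \Omega(\sqrt[3]{\theta})$, which (together with $\theta \geq \Omega(\epsilon\gamma)$) forces $\gamma \geq \Omega(\sqrt{\epsilon})$, then drives $|\beta|$ below $1/8$, as claimed.
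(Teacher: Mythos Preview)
Your approach differs substantially from the paper's. The paper does not use a Rayleigh-quotient comparison at all; it invokes the Davis--Kahan $\sin\theta$ theorem (Lemma~\ref{cl:sintheta}) to obtain $|\beta| \leq \norm{(\widetilde{M}-M)w} / (\lambda - \lambda_s)$ and then bounds the numerator $\norm{Ew}$ directly. The key step, which your plan misses entirely, is Claim~\ref{cl:bound2}: the \emph{block-norm vector} $\bar w \in \mathbb{R}^n$ with $\bar w_u = \norm{w_u}$ satisfies $\bar w^T A \bar w \geq w^T M w \geq (1-\theta)d$, so $\bar w$ itself is $\sqrt{\theta/\gamma}$-close to $S_{(1-\gamma)}$. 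This is how the $\ell_\infty$ hypothesis on $S_{(1-\gamma)}$ actually enters --- it controls the block norms of $w$ directly, and hence $\norm{Ew}$ via a row-sum computation on the perturbed edges (Claim~\ref{cl:numerator-bound}), with no reference whatsoever to the decomposition $w = \alpha y + \beta y_\perp$.

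Your route has a genuine gap. First a minor point: the entrywise bound $|y^T E y| \leq \norm{y}_\infty^2\sum|E_{uv}|$ picks up an extra factor of $k$ from the $\ell_1$ mass of each perturbed permutation block; the correct object is the block norm $\norm{y_u}$ (which is indeed $O(1/\sqrt n)$, and with no $D_S$-dependence, contrary to what you write). The real problem is the ``reabsorption'' of $\beta^2 y_\perp^T M y_\perp$ back into $w^T M w$. If you substitute $\beta^2 y_\perp^T M y_\perp = w^T M w - \alpha^2 y^T M y - 2\alpha\beta\, y^T M y_\perp$ into your expansion of $-w^T E w$, the $M$-terms combine with the $E$-terms into $\widetilde{M}$-terms, the cross term vanishes by $\widetilde{M}$-invariance of $Y$, and you are left with the tautology $-w^T E w = -w^T E w$. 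No new inequality on $\beta$ emerges. Absent an independent bound on $y_\perp^T E y_\perp$ better than $O(d)$, your master inequality becomes $(\gamma - O(1))\beta^2 \leq \theta + O(\epsilon) + O(\sqrt\epsilon)\beta$, which is vacuous since $\gamma < 1$; the asserted final shape $\gamma\beta^2 \leq \theta + O(\epsilon)(1+\beta/\gamma)$ is never derived.
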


 Combining claims~\ref{cl:dimensionD_S} and~\ref{cl:betabound}, we can proceed as follows:  From claim~\ref{cl:betabound} we obtain that $W$ has dimension $\text{dim}(W) \leq \text{dim}(Y)$. Otherwise, we would find a
vector orthogonal to all the vectors in $Y$ which cannot be close to their span. From claim~\ref{cl:dimensionD_S} we obtain $\text{dim}(Y)=k\cdot D_S$.\\
To conclude the proof of theorem~\ref{thm:gamma_max_lin1}, we apply theorem~\ref{thm:main} with $W$ being the (at most) $k\cdot D_S$ dimensional eigenspace of $M$ with eigenvalues $\geq (1-\theta)d$.

Finally, it remains to argue about the running time of the algorithm. Since $\text{dim}(W)\leq k \cdot D_S$, the
number of points is exponential in $kD_S$. Hence, the algorithm runs in $2^{O(k\cdot D_S)}+T_W(M)=2^{O(k\cdot D_S)}+\text{poly}(n\cdot k)$ time.

 \subsubsection{Perturbation of Eigenspaces: Proof of Claim~\ref{cl:betabound}}
We next prove claim~\ref{cl:betabound}. Towards this end, we will apply some results from matrix perturbation theory.
To find appropriate $\gamma$ and $\theta$ such that the eigenspaces $W$ and $Y$ are close, we use the following
claim which essentially appears in \cite{DK70} as the sin $\theta$ theorem and was used in \cite{KT}.
\begin{lemma} \label{cl:sintheta}
Let $w$ be a unit length eigenvector of $M$ with eigenvalue
$\lambda \geq (1-\theta)d$, and let $\lambda_s$  denote the largest
eigenvalue of $\widetilde{M}$ which is smaller than $(1-\gamma)d$. Then, $w$ can be
written as $\alpha y + \beta y_{\perp}$ with
$|\beta| \leq \frac{\norm{(\widetilde{M}-M)w}}{(\lambda - \lambda_s)}$. Here $y\in Y$ and $y_{\perp} \in Y^{\perp}$ are unit length vectors.
\end{lemma}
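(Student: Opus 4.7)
The plan is to exploit the fact that $Y$ and $Y^{\perp}$ are both invariant subspaces of $\widetilde{M}$ (since $\widetilde{M}$ is symmetric and $Y$ is spanned by eigenvectors of $\widetilde{M}$), and to compare the action of $\widetilde{M}$ on $w$ with the action of $M$ using the spectral gap between $\lambda$ and $\lambda_s$. First I would write $w = \alpha y + \beta y_{\perp}$ with $y \in Y$, $y_{\perp} \in Y^{\perp}$ both unit length (or zero), so that $\alpha^2 + \beta^2 = 1$. The quantity we want to bound is exactly $|\beta|$.

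Next I would apply the orthogonal projection $P_{\perp}$ onto $Y^{\perp}$ to the vector $(\widetilde{M} - \lambda I)w$. On the one hand, since $Mw = \lambda w$, we have
\begin{equation*}
(\widetilde{M} - \lambda I)w \;=\; (\widetilde{M} - M)w,
\end{equation*}
so $\|P_{\perp}(\widetilde{M} - \lambda I)w\| \le \|(\widetilde{M}-M)w\|$. On the other hand, invariance of $Y^{\perp}$ under $\widetilde{M}$ gives $P_{\perp}\widetilde{M} = \widetilde{M} P_{\perp}$, and hence
\begin{equation*}
P_{\perp}(\widetilde{M} - \lambda I)w \;=\; (\widetilde{M} - \lambda I)P_{\perp} w \;=\; \beta (\widetilde{M} - \lambda I) y_{\perp}.
\end{equation*}

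The final step is to lower bound $\|(\widetilde{M} - \lambda I)y_{\perp}\|$ using the spectral gap. Expanding $y_{\perp} = \sum_i c_i v_i$ in an eigenbasis of $\widetilde{M}$ consisting of eigenvectors with eigenvalues $\mu_i \le \lambda_s$ (by definition of $Y^{\perp}$ and $\lambda_s$), we get $(\widetilde{M} - \lambda I)y_{\perp} = \sum_i c_i(\mu_i - \lambda)v_i$, so
\begin{equation*}
\|(\widetilde{M} - \lambda I)y_{\perp}\|^2 \;=\; \sum_i c_i^2 (\mu_i - \lambda)^2 \;\ge\; (\lambda - \lambda_s)^2 \sum_i c_i^2 \;=\; (\lambda - \lambda_s)^2,
\end{equation*}
since $\lambda \ge (1-\theta)d > (1-\gamma)d > \lambda_s$ and $y_{\perp}$ is a unit vector. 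Combining the two displays yields $|\beta|(\lambda - \lambda_s) \le \|(\widetilde{M}-M)w\|$, which is the claimed bound.

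There is essentially no obstacle here beyond bookkeeping: the argument is the standard Davis--Kahan sin-theta calculation in its simplest one-sided form. The one subtlety worth being careful about is that $y$ or $y_{\perp}$ might be zero; in that case the bound is either trivially satisfied ($\beta = 0$) or the inequality $\|(\widetilde{M}-\lambda I)y_{\perp}\| \ge \lambda - \lambda_s$ still applies to the unit vector $y_{\perp}$, so the reasoning goes through unchanged.
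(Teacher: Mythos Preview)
Your argument is correct and is precisely the one-sided Davis--Kahan $\sin\theta$ computation. The paper does not actually supply a proof of this lemma; it simply attributes the statement to \cite{DK70} (and its use in \cite{KT}), so there is nothing to compare against beyond noting that you have written out the standard proof of the cited result. One cosmetic point: you write $\lambda \ge (1-\theta)d > (1-\gamma)d > \lambda_s$, but in the paper's setting only $\gamma \ge \theta$ is assumed, so the middle inequality may be an equality; the conclusion $\lambda > \lambda_s$ (which is all you need) still holds since $\lambda_s < (1-\gamma)d \le (1-\theta)d \le \lambda$.
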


As seen by the previous lemma, in order to prove that the space $Y$ does not change by much due to the perturbation,  we simply
need to bound $\norm{(\widetilde{M}-M)w}$. We will need the fact that $w$ is somewhat
``uniform" across blocks. To formalize this, let $\bar{w}$ be the $n$-dimensional vector such
that $\bar{w}_u = \norm{w_u}$ where $w_u$ is the $k$-dimensional vector
$(w_{u1}, \ldots, w_{uk})^T$. We then show that $\bar{w}$
is very close to a vector in $S_{(1-\gamma)}$.
\begin{claim}\label{cl:bound2}
If $w$ is a unit-length eigenvector of $M$ with eigenvalue more than $(1-\theta)d$ and $\bar{w}$ as above, then we can write $\bar{w}$
as $\bar{w}=a\phi + b \phi_{\perp}$ with $|b| \leq \sqrt{\frac{\theta}{\gamma}}$ and $\phi \in S_{(1-\gamma)}$, $\phi_{\perp} \in S_{(1-\gamma)}^{\perp}$, both having unit length.
\end{claim}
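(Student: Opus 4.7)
The plan is to reduce the claim to a Rayleigh-quotient estimate on $G$ via the projection $w \mapsto \bar{w}$, and then use the eigen-decomposition with respect to $S_{(1-\gamma)}$. The key observation is that $\bar{w}^T A \bar{w}$ should dominate $w^T M w$, because $M$ is a ``twisted'' version of a block-replicated copy of $A$, and taking absolute values of blocks can only increase the relevant bilinear form.

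First I would record two easy facts. (i) $\bar{w}$ is a unit vector: since $\bar{w}_u^2 = \|w_u\|^2 = \sum_i w_{ui}^2$, we have $\|\bar{w}\|^2 = \sum_u \sum_i w_{ui}^2 = \|w\|^2 = 1$. (ii) Since $w$ is a unit eigenvector of $M$ with eigenvalue at least $(1-\theta)d$, we have $w^T M w \ge (1-\theta)d$.

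Next I would bound $w^T M w$ from above by $\bar{w}^T A \bar{w}$. Writing $w_u = (w_{u1},\ldots,w_{uk})^T$ and using that $M_{uv} = w_{uv} \Pi_{uv}$ with $\Pi_{uv}$ a permutation matrix, I expand
\begin{equation*}
w^T M w \;=\; \sum_{u,v} w_{uv}\, w_u^T \Pi_{uv} w_v \;\le\; \sum_{u,v} w_{uv}\, \|w_u\|\,\|\Pi_{uv} w_v\| \;=\; \sum_{u,v} w_{uv}\, \bar{w}_u\,\bar{w}_v \;=\; \bar{w}^T A \bar{w},
\end{equation*}
where I used Cauchy–Schwarz, $w_{uv}\ge 0$, and the fact that $\Pi_{uv}$ preserves $\ell_2$ norms. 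Combining with (ii) yields $\bar{w}^T A \bar{w} \ge (1-\theta)d$.

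Finally I would decompose $\bar{w} = a\phi + b\phi_\perp$ with $\phi\in S_{(1-\gamma)}$ and $\phi_\perp\in S_{(1-\gamma)}^\perp$, both unit length, so that $a^2 + b^2 = 1$. Since $\phi^T A \phi \le d$ (top eigenvalue of $A$) and $\phi_\perp^T A \phi_\perp \le (1-\gamma)d$ (by definition of the orthogonal complement of the high-eigenvalue span),
\begin{equation*}
(1-\theta)d \;\le\; \bar{w}^T A \bar{w} \;=\; a^2\,\phi^T A \phi + b^2\,\phi_\perp^T A \phi_\perp \;\le\; a^2 d + b^2(1-\gamma)d \;=\; d - b^2\gamma d,
\end{equation*}
which rearranges to $|b|\le \sqrt{\theta/\gamma}$, as desired.

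There is no real obstacle here — the only subtlety worth double-checking is the sign conventions in the Cauchy–Schwarz step (which is valid even if $w_u^T \Pi_{uv} w_v$ is negative, since the right-hand side is nonnegative) and the fact that $\phi_\perp^T A \phi_\perp$ is bounded above, not just in absolute value, by $(1-\gamma)d$; this holds because $\phi_\perp$ lies in the span of eigenvectors of $A$ with eigenvalue at most $(1-\gamma)d$.
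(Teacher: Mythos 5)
Your proof is correct and matches the paper's argument essentially step for step: bound $w^T M w$ above by $\bar{w}^T A \bar{w}$ using that permutation blocks cannot increase the bilinear form (your Cauchy--Schwarz plus norm-preservation is the same as the paper's ``operator norm of $\Pi_{uv}$ at most $1$''), then decompose $\bar{w}$ relative to $S_{(1-\gamma)}$ and compare Rayleigh quotients. You even add the small but useful explicit check that $\|\bar{w}\| = 1$, which the paper leaves implicit.
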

\begin{proof}
Since, $w$ corresponds to a large eigenvalue,
we have that
\begin{equation*}
(1-\theta)d ~\leq~ (w)^TMw
~=~ \sum_{u,v} {w_u}^T A_{uv}\cdot\Pi_{uv} w_v
~\leq~ \sum_{u,v} \norm{w_u} A_{uv} \norm{w_v}
~=~ (\bar{w})^T A \bar{w}
\end{equation*}

Here the second inequality follows from the fact that the operator norm of $\Pi_{uv}$ is at most 1.
Writing $\bar{w}$ as $a{\phi} + b{\phi}_{\perp}$, we get
\begin{align*}
& ~(\bar{w})^T A \bar{w} ~\leq~   a^2 d + b^2 (1-\gamma)d \\
\implies & ~(1-\theta)d ~\leq a^2 d + b^2 (1-\gamma)d
~\implies~ |b| \leq \sqrt{\frac{\theta}{\gamma}}
\end{align*}
\end{proof}

Using the above, and the fact that the matrix $\widetilde{M}$ is only perturbed in $\epsilon$
fraction of the edges, we can now bound $\norm{(\widetilde{M}-M)w}$ as follows.
\begin{claim}\label{cl:numerator-bound}
$\norm{(\widetilde{M}-M)w} \leq O\left(\sqrt{\frac{\theta}{\gamma}}\right)d$
\end{claim}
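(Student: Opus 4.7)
The plan is to strip off the permutation structure first (reducing to a scalar problem on the bad subgraph), then split $\bar w$ into a smooth part in $S_{(1-\gamma)}$ plus a small remainder, and exploit the $\ell_\infty$ delocalization hypothesis of Theorem~\ref{thm:gamma_max_lin1} on the smooth part.

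First I would set $R := \widetilde M - M$. Because the adversary only touches an $\epsilon$-fraction of the edges, $R$ is supported on a subgraph $E'$, and on each surviving block $R_{uv} = w_{uv}(\widetilde\Pi_{uv} - \Pi_{uv})$, a difference of two permutation matrices scaled by $w_{uv}$. Each such block therefore has operator norm at most $2w_{uv}$. Applying the triangle inequality block-wise gives, for every $u$,
\begin{equation*}
\|(Rw)_u\| \;\leq\; 2\sum_{v:\,(u,v)\in E'} w_{uv}\,\|w_v\| \;=\; 2(B\bar w)_u,
\end{equation*}
where $B$ is the symmetric non-negative matrix with $B_{uv}=w_{uv}\mathbf 1[(u,v)\in E']$ and $\bar w$ is the block-norm vector from Claim~\ref{cl:bound2}. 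Hence $\|Rw\|\leq 2\|B\bar w\|$, and the rest of the argument is a bound on $\|B\bar w\|$.

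Next I would decompose $\bar w = a\phi + b\phi_\perp$ as in Claim~\ref{cl:bound2}, so that $\phi\in S_{(1-\gamma)}$ and $\phi_\perp\in S_{(1-\gamma)}^\perp$ are unit vectors with $|b|\leq \sqrt{\theta/\gamma}$, and I would estimate $\|B\phi\|$ and $\|B\phi_\perp\|$ separately. For the orthogonal part the cheap bound is enough: since $B$ has all row sums at most $d$, $\|B\|_{\rm op}\leq d$, so $\|B\phi_\perp\|\leq d$. For the smooth part I invoke the hypothesis $\|\phi\|_\infty\leq C/\sqrt n$. Writing $d'_u := \sum_{v:(u,v)\in E'}w_{uv}$, one computes
\begin{equation*}
\|B\phi\|^2 \;=\; \sum_u\Bigl(\sum_v B_{uv}\phi_v\Bigr)^{\!2} \;\leq\; \frac{C^2}{n}\sum_u (d'_u)^2 \;\leq\; \frac{C^2\,d}{n}\sum_u d'_u \;\leq\; C^2\,\epsilon\, d^2,
\end{equation*}
where the last step uses the edge-weight budget $\sum_u d'_u \leq \epsilon n d$ (the bad edges carry at most an $\epsilon$-fraction of the total edge weight, counted with both endpoints). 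Thus $\|B\phi\|\leq C\sqrt\epsilon\,d$.

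Putting the pieces together yields $\|Rw\|\leq 2\bigl(a\|B\phi\| + |b|\|B\phi_\perp\|\bigr) \leq 2\bigl(C\sqrt\epsilon + \sqrt{\theta/\gamma}\bigr)d$. Finally, the standing hypothesis $\theta\geq \Omega(\epsilon\gamma)$ gives $\sqrt\epsilon \leq O(\sqrt{\theta/\gamma})$, so the first term is absorbed into the second and the whole bound collapses to $O(\sqrt{\theta/\gamma})\,d$, as claimed. I expect the main obstacle to be the bound on $\|B\phi\|$: this is precisely where the $\Gamma$-Max-Lin hypothesis (through the $\ell_\infty$ delocalization of $S_{(1-\gamma)}$) must be used, since the trivial $\|B\|_{\rm op}\leq d$ bound alone is far too weak. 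The rest is essentially bookkeeping around the permutation structure and the edge-weight budget.
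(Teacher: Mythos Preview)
Your proposal is correct and follows essentially the same route as the paper: reduce to the $n\times n$ ``bad-edge'' matrix (your $B$, the paper's $R$), split $\bar w$ via Claim~\ref{cl:bound2}, bound the $\phi_\perp$-piece by the trivial row-sum/operator-norm bound, and bound the $\phi$-piece using the $\ell_\infty$ delocalization together with the $\epsilon nd$ edge-weight budget. The only cosmetic difference is that you bound $\sum_u (d'_u)^2 \leq d\sum_u d'_u$ directly, whereas the paper phrases the same estimate as an extremal argument (``maximized when $R$ has $d$ ones in $\epsilon n$ rows'').
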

\begin{proof}
Define the $n \times n$ matrix $R$ as $R_{uv} = w_{uv}\leq 1$ when the block $(\widetilde{M}-M)_{uv}$ has any
non-zero entry, and $R_{uv} = 0$ otherwise. Note that if $(\widetilde{M}-M)_{uv}$ is non-zero, then it
must be the (scaled) difference of two permutation matrices. Thus, for all $v$ we have
$\norm{(\widetilde{M}-M)_{uv}w_v} \leq 2R_{uv}\norm{w_v}$. We calculate:
\begin{align*}
\norm{(\widetilde{M}-M)w} ~=~ \sqrt{\sum_u\norm{\sum_v(\widetilde{M}-M)_{uv}w_v}^2}
  &~\leq~ \sqrt{\sum_u\left(\sum_v\norm{(\widetilde{M}-M)_{uv}w_v}\right)^2} \\
  &~\leq~ \sqrt{\sum_u\left(\sum_v 2R_{uv}\norm{w_v}\right)^2} \\
  &~=~ 2\norm{R \bar{w}}
\end{align*}
To estimate $\norm{R \bar{w}}$, we break it up as
\begin{equation*}
\norm{R \bar{w}} \leq |a|\norm{R \phi} +
                      |b|\norm{R \phi_{\perp}}
\end{equation*}
Since each row of $R$ has total sum of entries at most $d$,
$|b|\norm{R \phi_{\perp}} \leq \sqrt{\frac{\theta}{\gamma}}d$.
Also,
$$\norm{R \phi} = \sqrt{\sum_u|\sum_v R_{uv}\phi_v|^2}\leq \sqrt{\sum_u\left(\sum_v R_{uv}|\phi_v|\right)^2} \leq \frac{C}{\sqrt{n}}\sqrt{\sum_u\left(\sum_v R_{uv}\right)^2}$$
Since $R$ has a total sum of entries at most $\epsilon nd$, this expression is maximized when it
has $d$ 1s in $\epsilon n$ rows. This gives
$\norm{R \phi} \leq C\sqrt{\epsilon}d$.
And, putting everything together we obtain
\begin{equation*}
\norm{(\widetilde{M}-M)w} ~~\leq~~ 2C\sqrt{\epsilon}d + 2\sqrt{\frac{\theta}{\gamma}}d
~~\leq~~ O\left(\sqrt{\frac{\theta}{\gamma}}\right)d
\end{equation*}

We note that the $O(\cdot)$ in the above expression depends on $C$.
We obtained the last inequality by using the assumption of the lemma $\theta\geq \Omega(\epsilon \gamma)$.
\end{proof}

Combining the above bound with claim~\ref{cl:sintheta}, we get that any unit-length
vector $w \in W$ can be  expressed as $\alpha y + \beta y_{\perp}$ where $y \in Y$ and
$|\beta| \leq O(\sqrt{\frac{\theta}{\gamma}}d \cdot \frac{1}{(1-\theta)d - \lambda_s})$.
Recall that $\lambda_s$ was smaller than $(1-\gamma)d$, which implies
\begin{equation}\label{eq:betabound}
|\beta| \leq O\left(\sqrt{\frac{\theta}{\gamma}}\cdot \frac{1}{\gamma-\theta}\right)
\end{equation}

To conclude the proof of claim~\ref{cl:betabound} we take $\theta<< \gamma$.
We calculate:

\begin{equation}\label{eq:betabound2}
|\beta| \leq O\left(\sqrt{\frac{\theta}{\gamma}}\cdot \frac{1}{\gamma-\theta}\right)\leq O\left(\sqrt{\frac{\theta}{\gamma^3}}\right)
\end{equation}

It is sufficient to consider $\gamma \geq \Omega(\sqrt[3]{\theta})$ and observe that this also implies, by the assumption $\theta\geq \Omega(\epsilon \gamma)$ that $\gamma \geq \Omega(\sqrt{\epsilon})$.

\subsection{Generalizing to Non-Regular Graphs}\label{subsec:non_reg}
It remains to show that the above results hold for non-regular graphs. We will consider eigenvectors and eigenvalues of the Laplacian matrix of the label-extended graph. Let $G$ and $M$ as before. Let $D$ be the $nk \times nk$ diagonal matrix with block $D_{uu} = deg(u) \cdot I_k$, where $I_k$ is the $k\times k$ identity matrix.
The Laplacian of the label-extended graph is the matrix $L_M=D-M$. Let $d$ be the average degree of $G$, and therefore of $M$, namely $d=\frac{\sum_{u\in V}d_u}{n}$.\\
We re-state the main theorem~\ref{thm:main} to capture the non-regular case.

 \begin{theorem}
 Let $\mathcal{U}=(G,M,k)$ be a $(1-\epsilon)$ satisfiable instance of Unique Games and
$W$ the eigenspace of $L_M$ with eigenvalues less than $\gamma d$, for $\gamma \geq 8\epsilon$. There is an algorithm that runs in time $2^{O(\frac{\gamma}{\epsilon}\text{dim}(W))}+\text{poly}(n\cdot k)$ and finds an assignment that satisfies at
least $(1-O(\frac{\epsilon}{\gamma-8\epsilon}+\epsilon))$ fraction of the constraints.
\end{theorem}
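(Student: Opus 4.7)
The plan is to re-run the proof of Theorem~\ref{thm:main} essentially verbatim, after translating everything from adjacency-matrix language into Laplacian language. The correspondence is the natural one: an eigenvector of $\widetilde{M}$ with eigenvalue close to $d$ corresponds to an eigenvector of $L_{\widetilde{M}} = D - \widetilde{M}$ with eigenvalue close to $0$, and the adversarial perturbation of $\widetilde{M}$ into $M$ becomes a perturbation of $L_{\widetilde{M}}$ into $L_M$. The only genuinely new computation is the analog of the estimate $(y^{(\mathcal{L})})^T \widetilde{M} y^{(\mathcal{L})} \geq (1-2\epsilon)d$ used in Claim~\ref{cl:closeness}, which must now be stated with the average degree $d$ and the Laplacian.

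First, I would verify the bilinear form bound. Fix a completion $\widetilde{\mathcal{U}} = (G,\widetilde{M},k)$ of $\mathcal{U}$ and a perfectly satisfying labeling $\mathcal{L}$ of $\widetilde{\mathcal{U}}$; Observation~\ref{obs:evectors} gives $L_{\widetilde{M}} y^{(\mathcal{L})} = 0$. For the actual (possibly non-regular) matrix $M$, one computes
\begin{equation*}
(y^{(\mathcal{L})})^T L_M y^{(\mathcal{L})} \;=\; (y^{(\mathcal{L})})^T D y^{(\mathcal{L})} - (y^{(\mathcal{L})})^T M y^{(\mathcal{L})}.
\end{equation*}
The first term is $\sum_u \deg(u)\cdot \tfrac{1}{n} = d$ by the definition of the average degree. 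The second term is $\tfrac{2}{n}$ times the total weight of constraints satisfied by $\mathcal{L}$; since $\mathcal{L}$ satisfies at least a $(1-\epsilon)$ fraction of the total edge weight $nd/2$, this is at least $(1-2\epsilon)d$. Hence $(y^{(\mathcal{L})})^T L_M y^{(\mathcal{L})} \leq 2\epsilon d$.

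Second, I would reproduce Claim~\ref{cl:closeness} with $L_M$ in place of $M$. Write $y^{(\mathcal{L})} = a\, v_{\mathcal{L}} + b\, (v_{\mathcal{L}})_\perp$ with $v_{\mathcal{L}} \in W$ and $(v_{\mathcal{L}})_\perp \in W^\perp$ both of unit length and $a \geq 0$. Positive semidefiniteness of $L_M$ gives $(v_{\mathcal{L}})^T L_M v_{\mathcal{L}} \geq 0$, while by the definition of $W$ every unit vector in $W^\perp$ has Rayleigh quotient at least $\gamma d$. Therefore $2\epsilon d \geq b^2 \gamma d$, which yields $|b| \leq \sqrt{2\epsilon/\gamma}$. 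Inverting the decomposition gives $v_{\mathcal{L}} = \alpha y^{(\mathcal{L})} + \beta (y^{(\mathcal{L})})_\perp$ with $|\beta| \leq \sqrt{2\epsilon/\gamma}$, exactly matching the conclusion of Claim~\ref{cl:closeness}.

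The remainder of the argument is agnostic to regularity and carries over without change: Claim~\ref{cl:uniquemax} reads a labeling off any vector sufficiently aligned with some $y^{(\mathcal{L})}$ using only the coordinate structure of $y^{(\mathcal{L})}$; the $\epsilon$-net $\mathcal{S} \subseteq W$ constructed in Subsection~\ref{subsec:setS} has size $2^{O(\tfrac{\gamma}{\epsilon}\dim(W))}$ and contains a vector within distance $\sqrt{2\epsilon/\gamma}$ of every $v_{\mathcal{L}}$; and $\texttt{Recover-Solution}_{\mathcal{S}}(\mathcal{U})$ tries them all. Combining the two approximation errors (disagreement with $y^{(\mathcal{L})}$ on at most an $O(\epsilon/(\gamma-8\epsilon))$ fraction of blocks, plus the adversary's $\epsilon$ fraction of altered constraints) gives the stated guarantee. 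The only place where care is required is the first step: it is precisely the clean cancellation $(y^{(\mathcal{L})})^T D y^{(\mathcal{L})} = d$ that makes the Laplacian version go through without a regularity hypothesis, and this is why the theorem is stated with $d$ equal to the \emph{average} degree.
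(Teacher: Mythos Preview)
Your proposal is correct and follows essentially the same approach as the paper: both reduce to proving the Laplacian analog of Claim~\ref{cl:closeness}, namely $(y^{(\mathcal{L})})^T L_M y^{(\mathcal{L})} \leq 2\epsilon d$, and then invoke Claim~\ref{cl:uniquemax} and the $\epsilon$-net construction from Subsection~\ref{subsec:setS} unchanged. The only cosmetic difference is in how that bound is obtained---the paper writes $L_M = L_{\widetilde{M}} + (\widetilde{M}-M)$ and uses $L_{\widetilde{M}}y^{(\mathcal{L})}=0$, whereas you split $L_M = D - M$ and compute each term directly; both routes are equivalent and land on the same inequality.
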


\begin{proof}

Let $W$ be the span of the eigenvectors of $L_M$ with
eigenvalue less than $\gamma d$, for some $\gamma \geq 8\epsilon$.
We again recover highly satisfying assignments using our algorithm $\texttt{Recover-Solution}_\mathcal{S}(\mathcal{U})$. Similarly to the previous section, the set $\mathcal{S}\subseteq W$ will be a discretization of our new $W$.

All we need to show is the following analog of claim~\ref{cl:closeness}. We use the same notation as in section~\ref{sec:31}.

\begin{claim}
For every completely satisfying assignment $\mathcal{L}$ there is a unit vector $v_{\mathcal{L}} \in W$ such that $v_{\mathcal{L}}=\alpha y^{(\mathcal{L})}+\beta {y^{(\mathcal{L})}}_\perp$, with $|\beta|\leq \sqrt{\frac{2\epsilon}{\gamma}}$. Here $y^{(\mathcal{L})}$ and ${y^{(\mathcal{L})}}_\perp$ are both unit length vectors and ${y^{(\mathcal{L})}}_\perp \perp y^{(\mathcal{L})}$. By taking, for example, $\gamma\geq 200 \epsilon$, we have $|\beta|\leq \frac{1}{10}$.
\end{claim}

\begin{proof}(Of Claim)

 We first observe that the \textit{assignment} eigenvectors $y^{\mathcal{L}}$ are eigenvectors of $L_{\widetilde{M}}$ with eigenvalue $0$.

We next see that

\begin{eqnarray*}
(y^{(\mathcal{L})})^TL_My^{(\mathcal{L})}&=&(y^{(\mathcal{L})})^T(D-M)y^{(\mathcal{L})}=
(y^{(\mathcal{L})})^T(D-\widetilde{M}+\widetilde{M}-M)y^{(\mathcal{L})}\\
&=& (y^{(\mathcal{L})})^T(\widetilde{M}-M)y^{(\mathcal{L})} \leq \frac{2\epsilon 2|E|}{n}=2\epsilon d
\end{eqnarray*}

We can now write $y^{(\mathcal{L})} =a v_{\mathcal{L}}+b (v_{\mathcal{L}})_\perp$, with ${v_{\mathcal{L}}}\in W$ and $({v_{\mathcal{L}}})_\perp \in W^\perp$. We calculate:
$$2\epsilon d \geq (y^{(\mathcal{L})})^TL_My^{(\mathcal{L})} = a^2({v_{\mathcal{L}}})^T L_M{v_{\mathcal{L}}} + b^2((v_{\mathcal{L}})_\perp)^TL_M(v_{\mathcal{L}})_\perp \geq b^2\gamma d$$ from which we get that
$|b| \leq \sqrt{\frac{2\epsilon}{\gamma}}$.

Now, we can in turn express $v_{\mathcal{L}} = \alpha y^{(\mathcal{L})} +\beta {y^{(\mathcal{L})}}_\perp$, where $\alpha = \langle v_{\mathcal{L}}, y^{(\mathcal{L})}\rangle = |a| = \sqrt{1-b^2} \geq \sqrt{1-\frac{2\epsilon}{\gamma}}$ or, equivalently, $|\beta| =\sqrt{1-\alpha^2} \leq \sqrt{\frac{2\epsilon}{\gamma}}$.
\end{proof}

To conclude the proof of the theorem, we observe that the analog of claim~\ref{cl:uniquemax} follows immediately. We define $\mathcal{S}$ as in section~\ref{subsec:setS}, namely
\begin{equation*}
\mathcal{S} ~~=~~ \left\{ v = \sum_{s=0}^{(\text{dim}(W)-1)} \alpha_s w^{(s)} ~~\vert~~
\alpha_s \in \sqrt{\frac{2\epsilon}{\gamma\text{dim}(W)}}\bz, ~ \norm{v} \leq 1\right\}
\end{equation*}

where $w^{(i)}$ are now eigenvectors of our new $W$. The theorem follows.

\end{proof}

\section{Solving Unique Games on the Khot-Vishnoi Instance}\label{sec:KV}
\label{sec:examples}
 In this section, we show that, when run on the Khot-Vishnoi integrality gap instance, our main algorithm runs in quasi-polynomial time and correctly decides the (un)satisfiability of the instance. Namely, we show the following:

\begin{claim}\label{cl:kv}
  Our main algorithm (as in theorem~\ref{thm:main}) when given input the integrality gap instance of Khot and Vishnoi as described in the preliminaries section, with label-extended graph $\widetilde{\KV}_{n,\epsilon}$, runs in time $n^{\text{poly}(\log n)}$ and correctly decides that the instance is highly unsatisfiable. Here $N=2^n$ the number of nodes of $H_n$ and $n$ is the alphabet size.
\end{claim}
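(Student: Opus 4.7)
The plan is to instantiate Theorem~\ref{thm:main} on the Khot--Vishnoi label-extended graph $\widetilde{\KV}_{n,\epsilon}$, using its highly explicit spectrum to bound the dimension of the relevant eigenspace $W$. Since $\widetilde{\KV}_{n,\epsilon}$ is a Cayley graph of $H_n=\mathbb{F}_2^n$ with weight function proportional to $f(z)=n\epsilon^{|z|}(1-\epsilon)^{n-|z|}$, Lemma~\ref{lem:cayley} tells us that its eigenvectors are precisely the $2^n$ characters $\chi_S$ of $\mathbb{F}_2^n$, indexed by $S\subseteq [n]$. A direct Fourier computation of the noise kernel factors coordinate-by-coordinate and yields the eigenvalue $\lambda_S = n(1-2\epsilon)^{|S|}$ associated with $\chi_S$; in particular the graph has exactly $\binom{n}{i}$ eigenvectors at ``level'' $|S|=i$, and these eigenvalues decay geometrically in $|S|$.

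Next I would pick a constant $\gamma$ slightly larger than $8\epsilon$ (for concreteness, say $\gamma=9\epsilon$) and observe that the eigenspace $W$ of eigenvectors with eigenvalue at least $(1-\gamma)n$ is exactly
\[
W \;=\; \mathrm{span}\{\chi_S \,:\, (1-2\epsilon)^{|S|} \geq 1-\gamma\}.
\]
Taking logs, this membership condition reduces to $|S|\leq K$, where $K = \lceil \log(1-\gamma)/\log(1-2\epsilon)\rceil = O(\gamma/\epsilon)$ is a constant independent of $n$. Counting such characters gives $\dim(W) \leq \sum_{i=0}^{K}\binom{n}{i} \leq (K+1)n^K$, which is polynomial in the alphabet size $n$ (equivalently, polylogarithmic in $N=2^n$). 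Plugging into Theorem~\ref{thm:main}, the running time is
\[
2^{O((\gamma/\epsilon)\dim(W))} + \mathrm{poly}(Nn) \;=\; 2^{O(n^K)} + \mathrm{poly}(N) \;=\; 2^{\mathrm{poly}(\log N)},
\]
which is quasi-polynomial in the graph size, matching the bound claimed.

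Finally, to verify that the algorithm correctly decides unsatisfiability, I would invoke the first bullet of Claim~\ref{cl:KV}: every labeling of the KV instance satisfies at most an $n^{-\epsilon}$ fraction of the constraints, which for a constant $\epsilon>0$ is far smaller than the $1-O(\epsilon/(\gamma-8\epsilon)+\epsilon)$ threshold guaranteed by Theorem~\ref{thm:main}. Consequently the best assignment produced by the enumeration over the epsilon-net cannot clear this threshold, and the remark following Theorem~\ref{thm:main} then certifies that \emph{every} labeling satisfies strictly less than a $(1-\epsilon)$ fraction of the constraints -- i.e., the instance is highly unsatisfiable, which is precisely what the algorithm is required to output.

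The main technical step (and the only real obstacle) is the spectral computation in the first paragraph: one must carefully verify that the noise kernel on $\mathbb{F}_2^n$ has Fourier coefficients proportional to $(1-2\epsilon)^{|S|}$ and that Lemma~\ref{lem:cayley} translates this into the eigenvalue $n(1-2\epsilon)^{|S|}$ for the $n$-regular weighted Cayley graph $\widetilde{\KV}_{n,\epsilon}$. Once that identification is in place, the dimension bound on $W$, the running-time estimate from Theorem~\ref{thm:main}, and the unsatisfiability conclusion from Claim~\ref{cl:KV} all follow routinely.
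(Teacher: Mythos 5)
Your approach matches the paper's: compute the Cayley-graph spectrum of $\widetilde{\KV}_{n,\epsilon}$ via Fourier analysis, bound the dimension of $W$ by counting low-Hamming-weight characters, and feed that bound into Theorem~\ref{thm:main}. The spectral computation, the counting argument, and the running-time estimate are all essentially the paper's argument.

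However, your concrete choice $\gamma = 9\epsilon$ does not work, and this is a genuine gap. Theorem~\ref{thm:main} guarantees an assignment satisfying a $1-O\bigl(\frac{\epsilon}{\gamma-8\epsilon}+\epsilon\bigr)$ fraction of constraints, and you lean on this threshold being ``far larger'' than the $n^{-\epsilon}$ value of the KV instance. But with $\gamma = 9\epsilon$ the quantity $\frac{\epsilon}{\gamma-8\epsilon} = 1$, so the advertised threshold is $1 - O(1)$, which may be zero or negative depending on the hidden constant. Tracing the constants in the paper's proof makes this explicit: Claim~\ref{cl:closeness} gives $|\beta|\le\sqrt{2\epsilon/\gamma}$, the net contributes another $\sqrt{2\epsilon/\gamma}$, so $|b|\le 2\sqrt{2\epsilon/\gamma}$ and $b^2\le 8\epsilon/\gamma$, $a^2\ge(\gamma-8\epsilon)/\gamma$; then Claim~\ref{cl:uniquemax} bounds the fraction of misassigned blocks by $2b^2/a^2\le 16\epsilon/(\gamma-8\epsilon)$. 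With $\gamma=9\epsilon$ this is $16$, i.e.\ a vacuous bound, so the Remark after Theorem~\ref{thm:main} has nothing to certify and the algorithm cannot conclude unsatisfiability. You need $\gamma$ to be a sufficiently large multiple of $\epsilon$; the paper takes $\gamma = C\sqrt{\epsilon}$ (any $\gamma = C\epsilon$ with $C$ large enough would also do), which yields a usable threshold of $1-O(\sqrt{\epsilon})$ while keeping $K=O(\gamma/\epsilon)$, and hence $\dim(W)\le\mathrm{poly}(n)=\mathrm{poly}(\log N)$, bounded independently of $n$. With that repair the rest of your argument goes through.

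Separately, note the paper's final running time is $N^{\mathrm{poly}(\log N)}$ with $N=2^n$ the number of vertices; the ``$n^{\mathrm{poly}(\log n)}$'' in the claim statement appears to be a typo (it would be sublinear in the input size), so your $2^{\mathrm{poly}(\log N)}$ bound is the intended one.
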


The proof goes by showing that the eigenspace $W$ as per theorem~\ref{thm:main} has relatively low dimension. The theorem guarantees that exhaustive search in $W$ would find a good assignment if such assignment existed and consequently, failure to find such an assignment, will correctly classify  the Khot-Vishnoi instance as unsatisfiable.

The rest of this section is devoted to the proof of claim~\ref{cl:kv}.
We use the equivalent definition of $\widetilde{\KV}_{n,\epsilon}$ as a Cayley graph of $H_n$ which is an ``$\epsilon$-perturbed'' version of the hypercube graph. We will need the following claim for the spectrum of $\widetilde{\KV}_{n,\epsilon}$. We denote the adjacency matrix of $\widetilde{\KV}_{n,\epsilon}$ by $M_{\KV}$.

\begin{claim}\label{cl:KVspec}
The following is true for the spectrum of $\widetilde{\KV}_{n,\epsilon}$.
\begin{itemize}
\item The eigenvectors are the characters of the group $H_n$.
\item The eigenvalue that corresponds to $\chi_{\omega}$ is $(1-2\epsilon)^rn$, where $r=|\omega|$ is the hamming  weight of $\omega$, and appears with multiplicity $C_r= {n\choose r}$.
\end{itemize}
\end{claim}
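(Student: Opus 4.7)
The plan is to invoke the Cayley graph machinery from Lemma~\ref{lem:cayley} directly. The preliminaries already establish that $\widetilde{\KV}_{n,\epsilon}$ is a Cayley graph of $H_n = \mathbf{F}_2^n$, since its adjacency matrix entry between $u,v \in \{0,1\}^n$ depends only on $u+v$ (mod $2$). Specifically, if we write the weight as $M_{\KV}(u,v) = f(u-v)$, then from the description of $\widetilde{\KV}_{n,\epsilon}$ as an $\epsilon$-perturbed hypercube (together with the normalization that makes the graph $n$-regular), we have $f:H_n \to \mathbb{R}_+$ given by $f(z) = n\cdot \epsilon^{|z|}(1-\epsilon)^{n-|z|}$, where $|z|$ is the Hamming weight. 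The first bullet then follows immediately from Lemma~\ref{lem:cayley}, since abelian Cayley graphs are diagonalized by the group characters, and for $H_n$ these are the functions $\chi_\omega(x) = (-1)^{\langle \omega, x \rangle}$ for $\omega \in \{0,1\}^n$.

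For the second bullet, Lemma~\ref{lem:cayley} says that the eigenvalue associated with $\chi_\omega$ is $|H_n|\,\hat f(\omega) = 2^n \hat f(\omega)$, so it suffices to compute the Fourier coefficient $\hat f(\omega)$. The key observation is that $f$ has a product structure across the $n$ coordinates: $f(z)/n = \prod_{i=1}^{n}\bigl(\epsilon^{z_i}(1-\epsilon)^{1-z_i}\bigr)$. Because the character $\chi_\omega$ also factors as $\prod_i (-1)^{z_i\omega_i}$, the sum $\sum_{z} f(z)\chi_\omega(z)$ splits into a product of $n$ one-dimensional sums. Each such sum evaluates to $1$ when $\omega_i = 0$ and to $1-2\epsilon$ when $\omega_i = 1$. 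Collecting these factors gives
\begin{equation*}
\sum_{z\in H_n} f(z)\chi_\omega(z) \;=\; n\,(1-2\epsilon)^{|\omega|},
\end{equation*}
so that $2^n\hat f(\omega) = n(1-2\epsilon)^{|\omega|}$, which is exactly the claimed eigenvalue with $r = |\omega|$.

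Finally, the multiplicity statement is combinatorial: the eigenvalue $(1-2\epsilon)^r n$ depends only on $r = |\omega|$, and there are exactly $\binom{n}{r}$ strings $\omega \in \{0,1\}^n$ with Hamming weight $r$. Since the characters $\{\chi_\omega\}_{\omega\in H_n}$ form an orthonormal eigenbasis of $M_{\KV}$, each contributes one independent eigenvector, giving multiplicity $C_r = \binom{n}{r}$ as stated.

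There is no real obstacle here; the whole argument is essentially a clean application of Lemma~\ref{lem:cayley} plus the factorization identity $\sum_{z_i}\epsilon^{z_i}(1-\epsilon)^{1-z_i}(-1)^{z_i\omega_i} = 1-2\epsilon\cdot\mathbf{1}[\omega_i=1]$. The only care needed is to fix the correct normalization of $f$ so that the graph is $n$-regular (consistent with the $n$-regularity computation already performed in the preliminaries when evaluating the row sums of $A$), and to keep track of the convention for $\hat f$ used in Lemma~\ref{lem:cayley}, where $\chi_g(-z) = \chi_g(z)$ for $\mathbf{F}_2^n$ trivializes the sign issue.
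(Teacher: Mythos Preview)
Your proposal is correct and follows essentially the same route as the paper. Both arguments reduce to computing $\sum_{z\in H_n} n\,\epsilon^{|z|}(1-\epsilon)^{n-|z|}\chi_\omega(z)$ and factorizing it over coordinates; the paper arrives there via the Rayleigh quotient $\frac{1}{N}\chi_\omega^T M_{\KV}\chi_\omega$, whereas you invoke the eigenvalue formula $|H_n|\hat f(\omega)$ from Lemma~\ref{lem:cayley} directly, which is a slightly cleaner packaging of the same identity.
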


\begin{proof}
The first item above is immediate from the definition of the graph as it appears in the preliminaries section.

For the second item, we just need to calculate the eigenvalues corresponding to $\chi_{\omega}$ or, equivalently, the quadratic form
\begin{eqnarray*}
&&\frac{1}{\sqrt{N}}(\chi_\omega)^TM_{\KV} (\frac{1}{\sqrt{N}}\chi_\omega)=\frac{1}{N}\sum_{x,y \in H_n}\chi_\omega(x)M_{\KV}(x,y)\chi_\omega(y)\\
  &=& \frac{1}{2^n} \sum_{x,y\in H_n}\chi_\omega(x)n \epsilon^{|x+y|}(1-\epsilon)^{n-|x+y|}\chi_\omega(y)\\
  &=& \frac{n}{2^n} \sum_{x,y\in H_n}\chi_\omega(x+y)\epsilon^{|x+y|}(1-\epsilon)^{n-|x+y|}\\
  &=& \frac{n}{2^n}2^n \sum_{z\in H_n}\chi_\omega(z)\epsilon^{|z|}(1-\epsilon)^{n-|z|}\\
&=&n\mathbf{E}_{z \in_\epsilon \mathcal{F}}\big[\chi_\omega(z)\big] = \prod_{i=1}^n\mathbf{E}_{z_i\in \{0,1\}}[(-1)^{\omega_i\cdot z_i}]=n(1-2\epsilon)^{|\omega|}
\end{eqnarray*}
To conclude the proof, we need to argue about the multiplicity of each eigenvalue or, equivalently, the number of characters $\chi_\omega$ of a given hamming weight $r$. It is easily seen that there are $C_r ={n\choose r}$ characters
of $H_n$ that correspond to hamming weight $r$.
\end{proof}

We will directly apply theorem~\ref{thm:main}. For this purpose we need to calculate the dimension
of the eigenspace $W$ of $\widetilde{\KV}_{n,\epsilon}$ for some appropriate $\gamma \geq \Omega(\sqrt{\epsilon})$.

\begin{lemma}\label{lem:dimension}
Let $\gamma = C\sqrt{\epsilon}$ for some sufficiently large constant $C$. The dimension of the eigenspace $W$ of the graph $\widetilde{\KV}_{n,\epsilon}$ with eigenvalues $\geq (1-\gamma)$ is at most
$D_W \leq \frac{1}{\epsilon} n^{O(\frac{1}{\epsilon})}= \text{poly}(n) = \text{poly}(\log N)$.
\end{lemma}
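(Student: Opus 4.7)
The plan is to combine the explicit spectral description of $\widetilde{\KV}_{n,\epsilon}$ provided by Claim~\ref{cl:KVspec} with a straightforward binomial count. By that claim, every eigenvalue of $M_{\KV}$ has the form $(1-2\epsilon)^r n$ for some $r \in \{0,1,\dots,n\}$, and the eigenvalue corresponding to level $r$ has multiplicity exactly $\binom{n}{r}$ (one character $\chi_\omega$ per Hamming weight). So computing $\dim W$ reduces to determining the largest $r$ for which $(1-2\epsilon)^r n \geq (1-\gamma)n$, and then summing binomial coefficients up to that level.

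First I would solve the threshold inequality $(1-2\epsilon)^r \geq 1-\gamma$. Taking logarithms, this becomes $r \leq \ln(1-\gamma)/\ln(1-2\epsilon)$. Using the elementary bounds $-\ln(1-2\epsilon) \geq 2\epsilon$ and $-\ln(1-\gamma) \leq 2\gamma$ (valid for $\gamma$ bounded away from $1$, which certainly holds for sufficiently small $\epsilon$), this yields $r \leq \gamma/\epsilon$. Substituting $\gamma = C\sqrt{\epsilon}$ gives the cutoff $r_{\max} \leq C/\sqrt{\epsilon}$.

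Next I would bound the dimension by summing multiplicities:
\begin{equation*}
D_W \;=\; \sum_{r=0}^{r_{\max}} \binom{n}{r} \;\leq\; (r_{\max}+1)\binom{n}{r_{\max}} \;\leq\; (r_{\max}+1)\, n^{r_{\max}}.
\end{equation*}
Plugging in $r_{\max} \leq C/\sqrt{\epsilon}$, this is at most $O(1/\sqrt{\epsilon})\cdot n^{O(1/\sqrt{\epsilon})}$, which is in particular bounded by $\tfrac{1}{\epsilon}\, n^{O(1/\epsilon)}$. Since $\epsilon$ is treated as a constant and $N = 2^n$, the bound $n^{O(1/\epsilon)}$ is $\text{poly}(n) = \text{poly}(\log N)$, as claimed.

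There is no real obstacle here: Claim~\ref{cl:KVspec} has already done the spectral work of identifying the eigenvalues and their multiplicities, so the remaining argument is just the logarithmic inequality and the standard tail bound on binomial coefficients. The only step requiring a touch of care is ensuring that the approximations $\ln(1-x) \approx -x$ are applied on the correct side (an upper bound on $r$ requires a lower bound on $|\ln(1-2\epsilon)|$ and an upper bound on $|\ln(1-\gamma)|$), but with $\gamma = C\sqrt{\epsilon}$ small this is routine.
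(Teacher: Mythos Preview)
Your proposal is correct and follows essentially the same approach as the paper: invoke Claim~\ref{cl:KVspec} for the spectrum, solve the threshold inequality $(1-2\epsilon)^r \geq 1-\gamma$ via logarithms to get $r_{\max} \approx \gamma/\epsilon = C/\sqrt{\epsilon}$, and then sum the binomial multiplicities. Your use of the explicit inequalities $-\ln(1-2\epsilon)\geq 2\epsilon$ and $-\ln(1-\gamma)\leq 2\gamma$ is in fact slightly cleaner than the paper's Taylor expansion with an $\approx$ sign, and you obtain the marginally sharper intermediate bound $n^{O(1/\sqrt{\epsilon})}$ before relaxing to the stated $\tfrac{1}{\epsilon}\,n^{O(1/\epsilon)}$.
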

\begin{proof}
 We first need to identify an upperbound on $r$ such that $(1-2\epsilon)^r \leq (1-\gamma)$, for $\gamma = C\sqrt{\epsilon}$.
It is enough to take $r = \frac{\log 1-\gamma}{\log 1-2\epsilon}$ . In order to approximate $r$, we use the Taylor series expansion:
 \begin{equation*}
 \log \frac{1}{1-x} = \sum_{i=1}^\infty \frac{x^i}{i}
 \end{equation*}

And for $\epsilon, \gamma$ small enough we get

$$r = \frac{\log 1-\gamma}{\log 1-2\epsilon}= \frac{\log \frac{1}{1-\gamma}}{\log \frac{1}{1-2\epsilon}} \approx \frac{\gamma}{\epsilon}\leq C\frac{1}{\sqrt{\epsilon}} \leq \tilde{C} \frac{1}{\epsilon}$$ for some appropriately large constant $\tilde{C}$.
The dimension of $W$ can now be bounded, using claim~\ref{cl:KVspec}, as follows: $D_W \leq \sum_{r=0}^{\tilde{C}1/\epsilon}{n \choose r} \leq \frac{1}{\epsilon} n^{O(\frac{1}{\epsilon})}$.
\end{proof}

We conclude by arguing, according to theorem~\ref{thm:main}, that our algorithm will run in time bounded by
$2^{ D_W} = 2^{ \frac{1}{\epsilon}\cdot n^{O(\frac{1}{\epsilon})}}= N^{\frac{1}{\epsilon}\cdot (\log N)^{O(\frac{1}{\epsilon})}}=N^{(\text{poly}(\log N))}$. It will fail to find a highly satisfying assignment and thus decide that the above instance is highly unsatisfiable.
Here we used the fact that for the graph in question, $T_{S_W}(M)\leq 2^{ D_W}$.

\section*{Acknowledgements}

The author would like to thank Madhur Tulsiani for his contribution to the ideas and techniques for developing spectral algorithms for Unique Games. The author would also like to thank the anonymous reviewers for their comments and suggestions.
\bibliographystyle{alpha}
\bibliography{UGbib2}

\end{document}